\documentclass[preprint,12pt]{article}

% *** MATH PACKAGES ***
%
\usepackage[cmex10]{amsmath}
\usepackage{amssymb}
\usepackage{amsthm}
\usepackage{amsfonts}

% *** SPECIALIZED LIST PACKAGES ***
%
\usepackage{algorithmic}
\usepackage{algorithm}

% *** PDF, URL AND HYPERLINK PACKAGES ***
%
\usepackage{url}

\newtheorem{prop}{Proposition}
\newtheorem{lem}{Lemma}

\theoremstyle{definition}
\newtheorem{mdef}{Definition}
\newtheorem{ex}{Example}
\theoremstyle{remark}

\newcommand{\ocap}{\mbox{\footnotesize \textcircled{$\scriptstyle{\cap}$}}}
\newcommand{\ocup}{\mbox{\footnotesize \textcircled{$\scriptstyle{\cup}$}}}

\usepackage{hyperref}

\newcommand\pare[1]{\left( #1\right)}
 % valeur absolue
\newcommand\braces[1]{\left\lbrace #1\right\rbrace} % valeur absolue

\usepackage{color}

\definecolor{darkgreen}{RGB}{0,155,0}

\usepackage{array}
\usepackage{graphicx}
\usepackage{authblk}
%\journal{International Journal of Approximate Reasoning}

\begin{document}

\title{Complementary Lipschitz continuity results for the distribution of intersections or unions of independent random sets in finite discrete spaces}

\author[]{John~Klein}

\affil[]{\small Univ. Lille, CNRS, Centrale Lille, UMR 9189 - CRIStAL - Centre de Recherche en Informatique Signal et Automatique de Lille, F-59000 Lille}

\date{}

\maketitle

\begin{abstract}

We prove that intersections and unions of independent random sets in finite spaces achieve a form of Lipschitz continuity. More precisely, given the distribution of a random set $\Xi$, the function mapping any random set distribution to the distribution of its intersection (under independence assumption) with $\Xi$ is Lipschitz continuous with unit Lipschitz constant if the space of random set distributions is endowed with a metric defined as the $L_k$ norm distance between inclusion functionals also known as commonalities. Moreover, the function mapping any random set distribution to the distribution of its union (under independence assumption) with $\Xi$ is Lipschitz continuous with unit Lipschitz constant if the space of random set distributions is endowed with a metric defined as the $L_k$ norm distance between hitting functionals also known as plausibilities.

Using the epistemic random set interpretation of belief functions, we also discuss the ability of these distances to yield conflict measures. All the proofs in this paper are derived in the framework of Dempster-Shafer belief functions. Let alone the discussion on conflict measures, it is straightforward to transcribe the proofs into the general (non necessarily epistemic) random set terminology. 
 
\end{abstract}

\textbf{Keywords} :
random sets, Lipschitz continuity, belief functions, distance, combination rules, information fusion, conflict, $\alpha$-junctions.

\section{Introduction}

When one is interested in a point-valued random variable but has access to set-valued (imprecise) observations of this latter, Dempster \cite{Dem67} proposed to use a probabilistic model relying on multi-valued mappings. This model was further developed by Shafer \cite{Sha76} in a self-contained framework known today as Dempster-Shafer theory, evidence theory or belief function theory. In this framework, the uncertainty on the value of the random variable is equivalently\footnote{These functions are in bijective correspondence \cite{Sha76}.} captured (among others) by set functions known as the belief, plausibility and commonality functions. These functions evaluate respectively how likely it is that the imprecise observations imply / are consistent / are implied by some event. In this paper, we focus on the case where these events are disjunctions of elements of a finite and discrete space.

Besides, belief functions are also known to be formally equivalent to random sets \cite{Ngu78} and are interpretable as epistemic ones \cite{COUSO2014}. 
A random set is a random element whose realizations are set-valued. 
The probability masses governing the random set can also be uniquely characterized by set functions that are capacities \cite{matheron1975random} (non additive measures). 
%Indeed, each such set function is in bijective correspondence with the random set distribution. 
Some of these set functions are:
\begin{itemize}
	\item the containment functional that captures the probabilities that a given set contains the random set,
	\item the hitting functional (or capacity functional) that captures the probabilities that the random set intersects a given set,
	\item the inclusion functional that captures the probabilities that a given set is included in the random set.
\end{itemize}
The above functions are the respective random set terminology for the belief, plausibility and commonality functions.

In the deterministic setting, one can observe a form of consistency between operations like union or intersection with set-distances in the sense that, for some of these distances, if one intersects (resp. unites) the same set with two other ones, say $X_1$ and $X_2$, then the obtained intersections (resp. unions) are at least as close as $X_1$ and $X_2$ were before. 

In this paper, we investigate if this observed consistency propagates to some extent to the random setting. We prove that if one intersects (resp. unites) the same random set with two other ones, say $\Xi_1$ and $\Xi_2$, then the obtained random intersections (resp. unions) have distributions that are at least as close as the distributions of $\Xi_1$ and $\Xi_2$ were before. These results are dependent on the chosen metric for random set distributions. We examine metrics that consist in $L_k$ norm based distances between either hitting or inclusion functionals. In addition, the results can be rephrased as Lipschitz continuity (with a unit Lipschitz constant) for functions that map any random set distribution to the distribution of its intersection (resp. union) with a given fixed random set.

%A number of desirable properties for distances between belief functions have been elicited in the past decade \cite{Jou12} and some of them are an attempt to generalize properties of well known distances between sets. In \cite{KLEIN201615}, the authors studied the consistency of belief function distances with informational partial orders that are a generalization of set inclusion. These distances proved to be instrumental in the derivation of new combination rules \cite{klein2018idempotent}. 
 
In the belief function framework, the closest related works are those of Loudahi et al. \cite{Loudahi2014,Lou16}. The authors proved that the consistency under study holds between some belief function distances and combination operators that yield the distribution of the intersection or union of independent random sets. 
%Combination rules are information fusion operators applicable to a body of information items provided in the form of belief functions. 
The results that we introduce in this paper involves distances that are computationally more tractable than those introduced in \cite{Loudahi2014,Lou16} but also rely on independence assumptions. In the random set literature, many results regarding unions of i.i.d. random sets and random set metrics are available \cite{molchanov2005theory} but they do not address Lipschitz continuity. Also, we do not require the examined random sets to be identically distributed.

Furthermore, building upon recent work from Pichon and Jousselme \cite{pichon2019several}, we also investigate if our results can be instrumental to span new degrees of conflict. We prove that the consistency of a distance with the conjunctive rule makes the corresponding conflict degree compliant with at least one requirement discussed in Destercke and Burger \cite{des2013}. However, we also show that several distances relying on an $L_k$ norm are not appropriate to yield a degree of conflict as suggested in \cite{pichon2019several} when $k$ is finite. These last developments focus on information fusion aspects of the belief function framework only and do not generalize to non-epistemic random sets.

This article is organized as follows: section \ref{sec:2} gives necessary background on the theory of belief functions and random sets. 
Section \ref{sec:dist} is an overview of distances between random set distributions and the sought Lipschitz continuity property is stated. Section \ref{sec:consistency_with_conj} contains the main results of the paper, i.e. Lipschitz continuity for the distribution of intersection and union of independent random sets. Finally, in section \ref{sec:conflict}, we make use of the aforementioned results to investigate if the examined distances can yield relevant degrees of conflict in the belief function framework. All the proofs of the newly introduced results are given in the appendices.

Most of the paper is written using belief function terminology and usual notations in this framework but it can be transcribed to the random set framework by merely switching the set function names as explained in the first paragraphs of this introduction. Special care was paid to allow easy readability for readers familiar with any of these two frameworks.

\section{Belief functions and random sets}
\label{sec:2}
In this section, some mathematical notations for baseline belief function and random set concepts are given. The reader is expected to be familiar with one of these frameworks. More material on belief functions basics is found for instance in~\cite{Sha76,Cuz14b} and on random sets in \cite{molchanov2005theory,nguyenintroduction,couso2014random}.

Belief functions can be applied in the context of uncountable spaces \cite{Ngu78,Sme05,nguyenintroduction,Den14} but a majority of results were derived in the finite case and we also make this assumption in this article. 

\subsection{Baseline definitions}

A random set in a finite and discrete space $\Omega=\braces{\omega_t}_{t=1}^n$ is a random element whose realizations are subsets of $\Omega$. 
When one is interested in a point valued variable but has access to set valued (imprecise) observations, one can try to infer the distribution of an epistemic random set \cite{miranda2005random}. Belief functions are in line with this epistemic interpretation. When one is interested in a set valued variable and has access to corresponding samples, one can try to infer the distribution of an ontic random set \cite{matheron1975random}. 
%The following examples illustrate the discrepancies between epistemic and ontic uncertainty.

% \begin{ex}\label{ex:epistemic}(Epistemic uncertainty) 
% Suppose we want to train a classifier to tag images with respect to their contents and that one image belongs to a single category. Suppose also, we have access to a training set of (image,label) pairs but class labels are imprecise in the sense that they are supersets of the actual content, i.e. $\left\{ \text{cat or dog} \right\}$. In this context, the predictions of the classifier will be set valued and correspond to an epistemic random set.
% \end{ex}

% \begin{ex}\label{ex:ontic}(Ontic/Aleatory uncertainty) 
% Suppose we want to build a recommender system for users of some streaming video service. Each user is described by a set of binary attributes like "like action movies", "male", "lives in the US" and so long. An attribute has value 1 if the user is known to possess the attribute and 0 if she or he is not known to possess it. If $\text{user}_a$ liked some video and shares a large number of attributes with $\text{user}_b$ then we would recommend this video for $\text{user}_b$. To obtain a faithful user profile, suppose we have access to observations allowing to assign probability masses to different sets of attributes. What we are trying to describe here is not a unique attribute but a set of attributes and the corresponding object is thus an ontic random set.
% \end{ex}

Both types of uncertainty lead to formally equivalent objects although these objects need occasionally to be processed and understood in different ways \cite{COUSO2014}. %Belief functions are meant to capture epistemic uncertainty and we denote by $\theta \in \Omega$ the random variable in which one is interested (e.g. the image class label in example \ref{ex:epistemic}). 
In the finite (and consequently countable) setting, the distribution of a random set $\Xi_i$ is a set function called \textbf{mass function} and is denoted by $m_i$. The power set $2^\Omega$ is the set of all subsets of $\Omega$ and it is the domain of mass functions. For any $A\in 2^\Omega$, the \textbf{cardinality} of this set is denoted by $|A|$ and we thus have $|\Omega|=n$. The cardinality of $2^\Omega$ is denoted by $N=2^n$. Mass functions have $\left[0,1\right]$ as co-domain and they sum to one: $\sum_{A\in 2^\Omega}m_i\left(A\right)=1$. A \textbf{focal element} of a mass function $m_i$ is a set $A\subseteq \Omega$ such that $m_i(A)>0$. %, meaning that $\mathcal{I}_i$ contains evidence supporting the event $\left \lbrace \theta \in A \right \rbrace$. 
A mass function having only one focal element $A$ is called a \textbf{categorical mass function} and is denoted by $m_A$. %The categorical mass function $m_\Omega$ is called the \textbf{vacuous mass function} because it carries no information. 
A \textbf{simple mass function} is the convex combination of $m_\Omega$ with some categorical mass function $m_A$.

Several alternative set functions are commonly used as equivalent characterizations of $\Xi_i$. 
The \textbf{belief}, \textbf{plausibility} and \textbf{commonality} functions of a set $A$ are defined as 
\begin{eqnarray}
bel_i(A) &=& \sum_{E \subseteq A,E \neq \emptyset} m_i(E),\\
pl_i(A) &=& \sum_{E \cap A \neq \emptyset} m_i(E),\\
q_i(A) &=& \sum_{E \supseteq A} m_i(E)
\end{eqnarray}
and respectively represent how much likely it is that $A$ contains / intersects / is included in the underlying random set. In the random set literature, these set functions are respectively referred to as the  \textbf{containment}, \textbf{hitting} and \textbf{inclusion functionals}. When the empty set has a positive mass, another representation is provided by \textbf{implicability} functions $b_i$. These functions are closely related to belief and plausibility functions through the following relations: $\forall A \in 2^\Omega$,
\begin{eqnarray}
b_i(A) &=& bel_i(A) + m_i(\emptyset), \\
b_i(A) &=& 1 - pl_i\pare{A^c}\label{eq:pltob}.
\end{eqnarray}

Another useful concept is the \textbf{negation} (or complement) $\overline{m}_i$ of a mass function $m_i$ introduced by Dubois and Prade \cite{Dub86b}. The function $\overline{m}_i$ is such that $\forall A \subseteq \Omega$, $\overline{m}_i(A)=m_i(A^c)$ with $A^c=\Omega \setminus A$. The authors also provide a result that will be instrumental in the proof of proposition \ref{prop:q_dist_disj}. This result reads
\begin{equation}\label{eq:q_and_b}
	\overline{b}_i \left( A^c \right) = q_i \left( A \right) , \forall A \subseteq \Omega,
\end{equation}
where $\overline{b}_i$ denotes the implicability function in correspondence with $\overline{m}_i$.

%As part of the Transferable Belief Model, Smets~\cite{SME94} introduced the \textbf{pignistic transform} which is one way to retrieve a probability distribution with values inside the intervals specified by a mass function. This distribution is called the pignistic probability distribution $betp_i$ and it is obtained from $m_i$ as follows : $\forall \omega \in \Omega$,
%\begin{equation}
%betp_i(\omega) = \sum_{A \subseteq \Omega,\omega\in A}\frac{1}{|A|}\: \frac{m_i(A)}{1-m_i(\emptyset)}.
%\end{equation}

\subsection{Intersection, union and information fusion}

Information fusion in the framework of belief functions is performed using an operator mapping an arbitrary large set of input mass functions to a single output mass function which summarizes all information contained in the input ones. 
On top of this minimal requirement, the operator must also follow a certain policy in the way that the information encoded in the input mass functions is processed to build the output one. There are two canonical and dual such  policies: conjunctivity and disjunctivity. 

Suppose $\sqsubseteq$ denotes an informational partial order \cite{yager1986entailment,Dub86b,Den07} for mass functions in the sense that one writes $m_1 \sqsubseteq m_2$ if $m_1$ contains at least as much (epistemic) information as $m_2$. %In other words, learning $m_2$ after learning $m_1$ does not bring any new information on the value of $\theta$. 
Following \cite{dubois2016basic}, a fusion operator is conjunctive if its output is more informative than any input. The conjunctive rule operator \cite{Sme94} denoted  by $\ocap$ is defined as follows 

\begin{align}
m_1 \ocap m_2 \left( E \right)  = \sum_{\substack{A,B \subseteq \Omega \\ A \cap B = E}} m_1 \left( A \right) m_2 \left( B \right) , \forall E\subseteq \Omega.
\end{align}
The conjunctive rule is associative and commutative and the generalization of the above expression to more than two input mass functions is immediate. This rule is the unnormalized version of Dempster's rule \cite{Dem67} and on the random set side, it can be understood as the distribution of the intersection of two independent\footnote{There are several notions of independence for random sets \cite[chapter 2]{couso2014random}. In this paper, we only consider the usual probabilistic notion, i.e. joint distributions factorizing as the product of their marginals. } random sets. Obviously, this operator is a generalization of the set intersection as we have $m_A \ocap m_B = m_{A\cap B}$ for any two subsets $A$ and $B$ of $\Omega$. For the sake of equation concision we adopt the following notation $m_{1\cap 2} = m_1 \ocap m_2 $. This combination is very simple to compute when dealing with commonality functions:
\begin{equation}
	q_{1\cap2} \left( A \right) = q_1 \left( A \right)  q_2 \left( A \right), \forall A \subseteq \Omega.
\end{equation}

It can be easily proved that the conjunctive rule is conjunctive. Consider the informational partial order based on commonalities $\sqsubseteq_q$ which is defined as
\begin{align}
m_1 \sqsubseteq_q m_2 \Leftrightarrow q_1 \left( E \right) \leq q_2 \left( E \right) , \forall E \subseteq \Omega.
\end{align}
Since $q_{1\cap 2}$ is the elementwise multiplication of $q_1$ and $q_2$, we obtain $m_{1\cap 2} \sqsubseteq_q m_1$ and $m_{1\cap 2} \sqsubseteq_q m_2$.

When $m_2=m_E$, i.e. $m_2$ is categorical, the result of the conjunctive combination between $m_1$ and $m_2$ is referred to as the conditioning of $m_{1}$ given $E$ because this operation is a generalization of probabilistic conditioning\footnote{If focal elements of $m_1$ are singletons, i.e. the random set is point valued, then Dempster's conditioning coincides with Bayes rule: $m_{1|E}\left( A \right) = \frac{m_1 \left( A \right) }{m_1 \left( E \right) } $.} \cite{Sha76}. The mass function $m_1 \ocap m_E$ is also denoted by $m_{1|E}$. The following property of the implicability function w.r.t. conditioning will be instrumental in some proofs:
\begin{lem}\label{lem:implic}
For any mass function $m_1$, any categorical mass function $m_E$ and any subset $A\subseteq \Omega$, we have
\begin{equation}\label{eq:cond_b}
	b_{1|E} \left( A \right) = b_1 \left( \left( E\setminus A \right)^c  \right) = b_1 \left( E^c \cup A \right) .
\end{equation}
\end{lem}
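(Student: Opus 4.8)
The plan is to compute $b_{1|E}(A)$ directly from the definitions, expressing the implicability function as a sum of masses of the combined mass function $m_1 \ocap m_E$ and then simplifying the support condition on those masses. Recall that $b_{1|E}(A) = bel_{1|E}(A) + m_{1|E}(\emptyset) = \sum_{F \subseteq A} m_{1|E}(F)$, since including $F = \emptyset$ in the sum is exactly what turns $bel$ into $b$. So the first step is to write
\begin{equation}
b_{1|E}(A) = \sum_{F \subseteq A} (m_1 \ocap m_E)(F) = \sum_{F \subseteq A} \ \sum_{\substack{B,C \subseteq \Omega \\ B \cap C = F}} m_1(B) m_E(C).
\end{equation}

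Since $m_E$ is categorical with sole focal element $E$, the only nonzero term in the inner sum has $C = E$, so the double sum collapses to a single sum over sets $B \subseteq \Omega$ with $B \cap E \subseteq A$. The key combinatorial observation is then that $B \cap E \subseteq A$ is equivalent to $B \subseteq A \cup E^c$: indeed, $B \cap E \subseteq A$ means every element of $B$ that lies in $E$ also lies in $A$, i.e. $B$ contains no element of $E \setminus A = E \cap A^c$, which is precisely $B \subseteq (E \setminus A)^c = E^c \cup A$. Hence
\begin{equation}
b_{1|E}(A) = \sum_{B \subseteq E^c \cup A} m_1(B) = b_1\pare{E^c \cup A} = b_1\pare{(E\setminus A)^c},
\end{equation}
using once more that $b_1(X) = \sum_{B \subseteq X} m_1(B)$, and the set identity $(E\setminus A)^c = (E \cap A^c)^c = E^c \cup A$.

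I do not anticipate a serious obstacle here; the only point requiring a little care is the equivalence $B \cap E \subseteq A \iff B \subseteq E^c \cup A$, which should be stated explicitly rather than left to the reader, and the bookkeeping that the "$+ m(\emptyset)$" correction in the definition of $b$ is exactly accounted for by allowing $F = \emptyset$ (equivalently $B \subseteq E^c$) in the summation. Everything else is a direct unrolling of the definitions of the conjunctive rule and of the implicability function.
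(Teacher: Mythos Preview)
Your proof is correct and follows essentially the same approach as the paper: both expand $b_{1|E}(A)$ via the definitions of implicability and of conditioning, then reduce the resulting sum to $\sum_{B\subseteq E^c\cup A} m_1(B)$. Your argument is arguably a touch more streamlined, since you invoke the single equivalence $B\cap E\subseteq A \iff B\subseteq E^c\cup A$ directly, whereas the paper reaches the same conclusion through a partition-and-reindexing step (writing each contributing set as a disjoint union of a subset of $A\cap E$ and a subset of $E^c$).
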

To the best of our knowledge, this property is not reported in the belief function literature, we thus provide a  proof in \ref{ap:proof_of_lemma_lem:implic}.

As for disjunction, the output is required to be less informative than any input and it is thus considered as an extremely conservative fusion policy. The disjunctive rule operator \cite{Sme94} denoted  by $\ocup$ is defined as follows 

\begin{align}
m_1 \ocup m_2 \left( E \right)  = \sum_{\substack{A,B \subseteq \Omega \\ A \cup B =E}} m_1 \left( A \right) m_2 \left( B \right) , \forall E\subseteq \Omega.
\end{align}
The disjunctive rule is also associative and commutative and it is a generalization of set union as we have $m_A \ocup m_B = m_{A\cup B}$ for any two subsets $A$ and $B$ of $\Omega$. 
We denote by $m_{1\cup 2} $ the result of the following combination: $ m_1 \ocup m_2 $. 
On the random set side, $m_{1\cup 2} $ is understood as the distribution of the union of two independent random sets. 
The disjunctive combination is very simple to compute when dealing with implicability functions:
\begin{equation}
	b_{1\cup2} \left( A \right) = b_1 \left( A \right)  b_2 \left( A \right), \forall A \subseteq \Omega.
\end{equation}
The disjunctivity of this rule can be proved using the partial order based on implicabilities $\sqsubseteq_b$. This latter reads
\begin{equation}
	m_1 \sqsubseteq_b m_2 \Leftrightarrow b_1 \left( E \right) \geq b_2 \left( E \right) , \forall E \subseteq \Omega.
\end{equation}
 Since $b_{1\cup 2}$ is the elementwise multiplication of $b_1$ and $b_2$, we obtain the desired conclusion.

The disjunctive rule is related to the conjunctive rule by the following De Morgan relation \cite{Dub86b}: for any mass functions $m_1$ and $m_2$
\begin{equation}
	\overline{m_1 \ocap m_2} = \overline{m}_1 \ocup \overline{m}_2. \label{eq:de_morgan}
\end{equation}

 In \ref{ap:alpha_junction}, we mention a more general family of combination rules for belief functions which encompasses the conjunctive and disjunctive rules. These rules are known as $\alpha$-junctions \cite{Sme97}. Since $\alpha$-junctions have a more limited impact in the belief function literature than the conjunctive and disjunctive rules, we chose not to mention them in the main body of this article. However, the results that we prove in the next sections do propagate to $\alpha$-junctions. See \ref{ap:alpha_junction} for the corresponding proofs.

\subsection{The space of mass functions}
Mass functions can be viewed as vectors belonging to the vector space $\mathbb{R}^N$ with categorical mass functions as base vectors. Since mass functions sum to one, the set of mass functions is the simplex $\mathcal{M}$ in that vector space whose vertices are the base vectors $\braces{m_A}_{A\subseteq \Omega}$. This simplex is also called \textbf{mass space}~\cite{Cuz08} and has finite Lebesgue measure but contains uncountably many mass functions. 

Embedding mass functions in a vector space is particularly useful when computing either $m_{1\cap 2}$ or $m_{1\cup 2}$ because they can be obtained as the dot product of some matrix with one of the input mass functions (seen as a column vector) \cite{Sme02}. 
Each such matrix is in one-to-one correspondence with the other mass function. The vector form of any set function will be denoted using bold characters, for instance, the vector form of a mass function $m_i$ is denoted by $\mathbf{m}_i$.

Let $\mathbf{S}_{1}$ denote the \textbf{specialization} matrix \cite{Dub86b} in bijective correspondence with $m_1$. 
Each entry of $\mathbf{S}_1$ is given by $S_1(A,B)=m_{1|B}\left(A\right)$. 
%In other words, $\mathbf{S}_1$ does not only represent the current state of belief depicted by $m_1$ but also all reachable states from $m_1$ through a conditioning. 
From a geometric point of view~\cite{Cuz10}, each column of $\mathbf{S}_1$ corresponds to the vertex of a polytope $\mathcal{P}_1$, called the \textbf{conditional subspace} of $m_1$. Any mass function $m \in \mathcal{P}_1$ is the result of the combination of $m_1$ with another mass function using $\ocap$. Most importantly, for any mass functions $m_1$ and $m_2$, we have
\begin{equation}
	\mathbf{m}_{1\cap2} = \mathbf{S}_1 \cdot \mathbf{m_2}.
\end{equation}

Let $\mathbf{G}_{1}$ denote the \textbf{generalization matrix} in  bijective correspondence with $m_1$. 
Each entry of $\mathbf{G}_1$ is given by $G_1(A,B)=m_{1\cup B}\left(A\right)$. For any mass functions $m_1$ and $m_2$, we have
\begin{equation}
	\mathbf{m}_{1\cup2} = \mathbf{G}_1 \cdot \mathbf{m_2}.
\end{equation}

There are also transfer matrices allowing to turn mass functions in commonality or implicability functions using a right-handed dot product. They are presented in more details in \ref{ap:alpha_junction}.
%In general, all such matrices will be called \textbf{evidential matrices}. 
%When it is not necessary to stress the dependency of evidential matrices on $\alpha$ and on the conjunctive/disjunctive cases, an evidential matrix is denoted by $\mathbf{K}_1$ for the sake of equation concision.\\ 

%Moreover, the transpose of either a specialization or a generalization matrix is a stochastic matrix, meaning that all lines sum to one.

\section{From mass function metrics to Lipschitz continuity}\label{sec:dist}
%The focus of this article is evidential distances. As explained in the introduction, such distances, or more generally dissimilarities, are useful for many other evidential approaches. 
In this section, we will first recall the definitions of some existing distances between mass functions. We focus on (full) metrics and do not discuss dissimilarities \cite{TESSEM1993,Z-D} which have fewer baseline properties as compared to metrics. The Lipschitz continuity property that we seek will then be stated and its desirability will be justified by analyzing set-distances.

\subsection{Vector-based distances}\label{subsec:review}
A distance, or metric, provides a positive real value assessing the discrepancies between two elements. Let us first give a general definition of such an application when the compared vectors are mass functions:
\begin{mdef}
Given a domain $\Omega$ and its related mass space $\mathcal{M}$, a mapping $d:\mathcal{M} \times \mathcal{M} \longrightarrow [0,a]$ with $a\in\mathbb{R}^+$ is a \textbf{distance} between two mass functions $m_1$ and $m_2$ defined on $\Omega$ if the following properties hold:
\begin{itemize}
\item Symmetry : {$d(m_1,m_2) = d(m_2,m_1)$,}
\item Definiteness : {$d(m_1,m_2) = 0 \Leftrightarrow m_1=m_2$,}
\item Triangle inequality : {$d(m_1,m_2)\leq d(m_1,m_3)+d(m_3,m_2)$. }
\end{itemize} \label{Def1}
\end{mdef}
If the mapping fails to possess some of the above properties, then it degrades into unnormalized distance, dissimilarity or pseudo-distance. Only full metrics are able to provide a positive finite value that matches the intuitive notion of \textit{gap}\footnote{This term was used by Frechet~\cite{Fre1906} in his early works on metric spaces, \textit{i.e.} spaces endowed with a distance.} between elements of a given space. \\
If $a\neq +\infty$, then the distance is bounded and if in addition $a=1$, the distance is \textbf{normalized}. Provided that a mass function distance $d$ is bounded, this distance can be normalized by dividing it with $\rho = \max_{A,B\in 2^\Omega}d\pare{m_A,m_B}$ which is the \textbf{diameter} of $\mathcal{M}$ \cite{Lou16}. 

The most popular metric in the belief function literature is Jousselme distance \cite{Jou01}. It is based on an inner product relying on a similarity matrix. This distance is given by:
\begin{equation}
d_{J}(m_1 , m_2) = \sqrt{ \frac{1}{2}\:(\mathbf{ m_1 - m_2})^T\cdot \mathbf{D}\cdot (\mathbf{ m_1 - m_2})}, \label{dJ}
\end{equation}
where $\mathbf{m}_i$ denotes the column vector version of mass function $m_i$ and $\bf D$ is the Jaccard similarity matrix \cite{jac1901} between focal elements. Its components are:
\begin{equation}
D(A,B)=\begin{cases}
1 & \mbox{ if } A=B=\emptyset \\
\frac{|A\cap B|}{|A\cup B|} & \mbox{ otherwise}
\end{cases} .
\end{equation}
 Thanks to the matrix $\bf D$, Jousselme distance takes into account the dependencies between the base vectors of $\mathcal{M}$. Consequently, the poset structure of $ \left(2^\Omega , \sqsubseteq \right)$  has an impact on distance values, allowing a better match with the user's expectations.

Many other mass function distances are defined similarly by substituting matrix $\bf D$ with another matrix evaluating the similarity between base vectors in different ways~\cite{Dia06,Cuz08}. Experimental material in~\cite{Jou12} shows that these distances are highly correlated to $d_J$.

%The vector-based distances cited in this subsection constitute a small sample of those found in the literature. However, this is a representative sample with respect to the diversity of evidential dissimilarities. They are therefore relevant candidates for a comparison with the new distances that will be introduced in the next subsection.

Observe that the aforementioned distances are the $L_2$ norm of the difference of two vectors which are obtained by applying the same linear mapping to each mass function under comparison. We can thus build other distances by resorting to other norms. In particular, when the linear mapping maps a mass function to its corresponding plausibility, commonality or implicability function\footnote{See \cite{Sme02} for the definition of this linear mappings.}, we obtain distances that will be instrumental in the sequel of this paper. The formal definition of these distances follows.
\begin{mdef}\label{def:vec_dist}
  For some family $f \in \left\{ q, bel, pl, b \right\}$ of set functions in bijective correspondence with mass functions, an $L_k$ norm based $\mathbf{f}$\textbf{-distance} $d_{f,k}$ is the following mapping:
\begin{equation}
\begin{array}{cccc}
 d_{f,k} : & \mathcal{M} \times \mathcal{M}  &\rightarrow &\left[0,1\right],\nonumber\\
                & \pare{m_1, m_2}                 &\rightarrow &\frac{1}{\rho} \left\|\mathbf{f}_{1} - \mathbf{f}_{2}\right\|_k.
\end{array}
\end{equation}
$\mathbf{f}_{i}$ is the vector representation of the set function $f_i$ (in correspondence with $m_i$) and $\rho$ is a normalization factor given by 
$$\rho=\max_{A,B\in 2^\Omega}\left\| \mathbf{f}_{A} - \mathbf{f}_{B} \right\|_k.$$
\end{mdef}
For any vector $\mathbf{f}\in \mathbb{R}^N$, its $L^k$ norm are given by:
\begin{eqnarray}
 \left\| \mathbf{f}\right\|_{k} &=& \left( \sum_{A \subseteq \Omega}|f \left( A \right) |^k \right)^{\frac{1}{k}}.\label{eq:Lk}
\end{eqnarray}

Given relation \eqref{eq:pltob}, we see that $d_{pl,k} = d_{b,k}$ for any $k$. Consequently, we do not further mention distances between implicability functions in the sequel of this article. We end this subsection with a small result giving closed form expressions for constant $\rho$ for distances between plausibilities and commonalities.

\begin{lem}
For $L_k$ norm based distance between commonality or  plausibility functions, we have
\begin{equation}
	\rho = \begin{cases} \left( N-1 \right)^{1/k} & \text{ if } k<\infty\\ 1 & \text{ if } k=\infty \end{cases}.
\end{equation}
\end{lem}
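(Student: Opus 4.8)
The plan is to handle commonality functions first and then obtain the plausibility case by complementation. For a categorical mass function $m_A$, the definition of $q$ gives $q_A(X)=1$ if $X\subseteq A$ and $q_A(X)=0$ otherwise, so the vector $\mathbf{q}_A$ is the indicator of the collection $2^A$ of all subsets of $A$, whose cardinality is $2^{|A|}$. Hence, for any $A,B\subseteq\Omega$, each coordinate $q_A(X)-q_B(X)$ belongs to $\braces{-1,0,1}$ and is nonzero exactly on the symmetric difference $2^A\triangle 2^B$. Since $2^A\cap 2^B = 2^{A\cap B}$, this yields, for finite $k$,
\[
\left\| \mathbf{q}_A - \mathbf{q}_B \right\|_k^k \;=\; \left| 2^A \triangle 2^B \right| \;=\; 2^{|A|} + 2^{|B|} - 2^{|A\cap B|+1},
\]
while $\left\| \mathbf{q}_A - \mathbf{q}_B \right\|_\infty = 1$ whenever $A\neq B$.

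The remaining work is the discrete maximization of $g(A,B) := 2^{|A|} + 2^{|B|} - 2^{|A\cap B|+1}$ over $2^\Omega \times 2^\Omega$; this is the only step that is not routine bookkeeping. The pair $\pare{\Omega,\emptyset}$ gives $g = 2^n + 1 - 2 = N-1$, so it suffices to show that no pair does strictly better. If neither $A$ nor $B$ equals $\Omega$, then $|A|,|B|\leq n-1$ and $g(A,B)\leq 2^{n-1}+2^{n-1}-2 = 2^n-2 < N-1$. If $A=\Omega$, then $A\cap B = B$ and $g(A,B) = 2^n + 2^{|B|} - 2^{|B|+1} = 2^n - 2^{|B|} \leq N-1$, with equality iff $B=\emptyset$; the case $B=\Omega$ is symmetric. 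Therefore $\rho = \max_{A,B}\left\| \mathbf{q}_A - \mathbf{q}_B \right\|_k = (N-1)^{1/k}$ for finite $k$; and for $k=\infty$ every coordinate difference has absolute value at most $1$ and the value $1$ is attained (any two distinct subsets of $\Omega$ work, and such exist as soon as $n\geq 1$), so $\rho=1$.

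It remains to transfer the conclusion to plausibilities. For a categorical mass function, $pl_A(X)=1$ iff $X\cap A\neq\emptyset$, i.e. iff $X$ is \emph{not} a subset of $A^c$; hence $\mathbf{pl}_A = \mathbf{1} - \mathbf{q}_{A^c}$, where $\mathbf{1}$ denotes the all-ones vector (equivalently, one may combine the identity $d_{pl,k}=d_{b,k}$ with \eqref{eq:q_and_b}). Consequently $\mathbf{pl}_A - \mathbf{pl}_B = \mathbf{q}_{B^c} - \mathbf{q}_{A^c}$, so $\left\| \mathbf{pl}_A - \mathbf{pl}_B \right\|_k = \left\| \mathbf{q}_{A^c} - \mathbf{q}_{B^c} \right\|_k$. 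Since $X\mapsto X^c$ is a bijection of $2^\Omega$, maximizing over all pairs $(A,B)$ gives the same value as in the commonality case, and $\rho$ takes the announced form for plausibility functions as well. As noted, the only genuine obstacle is the optimization of $g$, which the short case split above settles.
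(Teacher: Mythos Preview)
Your proof is correct but proceeds along a different route from the paper. The paper's sketch appeals to an external result (proposition~2 in \cite{KLEIN201615}) to assert that the maximum over categorical pairs is attained at $(m_\Omega,m_\emptyset)$, and then simply evaluates $|f_\Omega(A)-f_\emptyset(A)|$ coordinatewise. You instead compute $\|\mathbf{q}_A-\mathbf{q}_B\|_k^k$ in closed form as $|2^A\triangle 2^B|=2^{|A|}+2^{|B|}-2^{|A\cap B|+1}$ and then carry out the discrete maximization by a short case split, before transferring the result to plausibilities via the identity $\mathbf{pl}_A=\mathbf{1}-\mathbf{q}_{A^c}$. Your argument is self-contained and elementary, avoiding the need for the cited proposition; the paper's approach is terser but relies on that imported fact. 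Both lead to the same maximizing pair and the same value of $\rho$.
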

\begin{proof} (sketch)
Given proposition 2 in \cite{KLEIN201615}, for any of the distances evoked in the lemma, we have 
\begin{equation}
	\max_{A,B\in 2^\Omega} d\left( m_A,m_B\right) = d\left( m_\Omega,m_\emptyset\right).
\end{equation}
Finally, for $f \in \left\{ q,pl \right\}$, we always have $|f_\Omega \left( A \right) - f_\emptyset \left( A \right)| =1$ if $A\neq \emptyset$ and $|f_\Omega \left( \emptyset \right) - f_\emptyset \left( \emptyset \right)| =0$.
\end{proof}

%Just for clarity, the above definition encompasses $L_k$ norm based $q$, $b$, $bel$ and $pl$-distances.

\subsection{Matrix-based distances} % (fold)
\label{sub:matrix_based_evidential_distances}

Since specialization and generalization matrices are also in bijective correspondence with mass functions, we can use the same recipe as in definition \ref{def:vec_dist} to build new mass function distances. The only difference is that mass functions are mapped to matrices and one must thus resort to matrix norms instead of vector norms. Such distances were first introduced in \cite{Loudahi2014,Lou16}. A subset of these distances are defined as follows:

\begin{mdef}\label{def:mat_dist}
  The $L_k$ norm based \textbf{specialization distance} $d_{\text{spe},k}$ is the following mapping:
\begin{equation}
\begin{array}{cccc}
 d_{\text{spe},k} : & \mathcal{M} \times \mathcal{M}  &\rightarrow &\left[0,1\right],\nonumber\\
                & \pare{m_1, m_2}                 &\rightarrow &\frac{1}{\rho} \left\|\mathbf{S}_{1} - \mathbf{S}_{2}\right\|_k.
\end{array}
\end{equation}
$\mathbf{S}_{i}$ is the specialization matrix in correspondence with $m_i$ and $\rho$ is a normalization factor given by 
$$\rho= \begin{cases} \left( 2 \left(    N-1 \right)\right)^{1/k} & \text{ if } k<\infty\\ 1 & \text{ if } k=\infty \end{cases}$$
 %\max_{A,B\in 2^\Omega}\left\| \mathbf{S}_{A} - \mathbf{S}_{B} \right\|_k.
\end{mdef}
For any matrix $\mathbf{F}\in \mathbb{R}^N$, its $L^k$ norm is given by:
\begin{eqnarray}
 \left\| \mathbf{F}\right\|_{k} &=& \pare{\sum_{A,B \subseteq \Omega}|F \left( A,B \right) |^k}^{\frac{1}{k}}.\label{eq:Lk}
\end{eqnarray}
It was proved in \cite{Lou16} that if we use generalization matrices in the same way as the above definition, we obtain a distance that coincides with the specialization distance.

Other matrix norms were investigated in \cite{Loudahi2014,Lou16}, i.e. operator norms. These norms lead to mass function distances that have fewer desirable properties\footnote{These distances are not consistent with informational partial orders that generalize set inclusion. See \cite{KLEIN201615} for a definition of the consistency of mass function distances with partial orders.} as compared to $L_k$ matrix norm based ones. They are thus not mentioned in this article.

% subsection matrix_based_evidential_distances (end)

\subsection{Union, intersection and set-distances} % (fold)
\label{sub:a_definition_of_consistency_between_distances_and_combinations_rules}

There are two main types of metrics between sets \cite{lesot2008similarity}: those accounting for how many elements are shared by the subsets and those that also account for the number of elements that they do not share. Examples of each category are the following:
\begin{itemize}
	
	\item the Jaccard distance 
	\begin{equation}
d_{\text{jac}}\left( A,B \right)  = \begin{cases} 0 & \text{ if } A=B=\emptyset \\ 1 - D \left( A,B \right) =\frac{|A \Delta B|}{|A \cup B|} & \text{ otherwise }  \end{cases},\nonumber\end{equation}
\item the (normalized) Hamming set distance $d_{\text{ham}} \left( A,B\right) = \frac{|A \Delta B|}{n} $,
\end{itemize}
 where $\Delta$ denotes the set symmetric difference.

The Jaccard distance belongs to the first type of metric while the Hamming distance belongs to the second one. The following example illustrates their main difference.

\begin{ex}
In this example, we replace subsets by their binary representations, i.e. $A=0011$ means that $n=4$ and the elements of $A$ are the third and fourth elements of $\Omega$. We have
\begin{align}
d_{\text{jac}}\left( 0011,0110 \right) = \frac{2}{3} & \text{ and } d_{\text{ham}} \left( 0011,0110 \right) = \frac{2}{4},\\
\text{while } d_{\text{jac}}\left( 00011,00110 \right) = \frac{2}{3} & \text{ and } d_{\text{ham}} \left( 00011,00110 \right) = \frac{2}{5}.
\end{align}
The hamming distance decreases as a fifth element is contained in $\Omega$ and thus this distance depends on elements that the subset do not share.

\end{ex}

We can wonder how these distances interact with set operations like intersection and union. Actually, the nature of these interactions are highly dependent on how one wishes to perform information fusion using either intersections or unions:
\begin{itemize}
	\item Suppose subsets represent a collection of candidate contents that a classifier must assign to an input image. If we want to evaluate if two images have similar contents, we can use a set distance between their imprecise tags. Suppose $\text{image}_a$ is tagged as $\left\{ \text{cat or dog} \right\}$ and $\text{image}_b$ is tagged as $\left\{ \text{dog or bike} \right\}$. If we learn from a second classifier that both images contain pictures of an animal, then we deduce that the image contents are more likely to be closer after inserting this information. 
	More formally, if one intersects both $A$ and $B$ with a third party subset $C$, the result of these intersections cannot be more distant than $A$ and $B$ were initially, which reads

	$$(\text{a})\;\;\; d \left( A\cap C, B \cap C \right) \leq d \left( A, B\right). $$

	\item Suppose subsets are lists of attributes of some streaming video service users. Suppose that it is known that $\text{user}_a$ likes action movies, is a male and lives in the US. Suppose $\text{user}_b$ likes comedies, is a male and lives in the UK. Suppose we learn that both of them also like science fiction movies, then we deduce that these users have closer profiles than previously thought. More formally, if one unites both $A$ and $B$ with a third party subset $C$, the result of these unions cannot be more distant than $A$ and $B$ were initially, which reads

	$$(\text{b})\;\;\;d \left( A\cup C, B \cup C \right) \leq d \left( A, B\right) .$$

\end{itemize}
Observe that in both of these examples, one adopts a conjunctive information fusion policy in the sense that aggregation results are more informative than each input. The conjunctive/disjunctive nature of an operator (like intersection or union) depends on the type of underlying uncertainty. What these examples are meant to highlight is that, should you intend to combine the informative content of subsets using either intersections or unions, then a set distance should comply to either (a) or (b) in order to translate in the numerical distance values that informative contents are more similar after fusion.

It can be proved that the Hamming set distance verifies (a) and (b) while the Jaccard distance verifies (b) only, c.f. \ref{ap:set_dist} for more details. From the applicative contexts of the above examples, we see that the desirability of property (a) or (b) depends on the information fusion operator. Outside the scope of information fusion, we may not require any of these properties for a set metric. 
When one intends to perform information fusion with random sets, it makes sense to wonder if some mass function distances can generalize these properties with respect to information fusion operators defined for them.

\subsection{Union, intersection and random set distribution distances} % (fold)
\label{sub:union_intersection_and_random_set_distribution_distances}

% subsubsection union_intersection_and_random_set_distribution_distances (end)

One way to generalize the properties $(\text{a})$ or $(\text{b})$ to random sets is stated by the following property:

\begin{mdef}\label{def:weak_consist}
Let $\ast$ be a combination operator and $d$ a mass function distance. $d$ is said to be \textbf{consistent} with respect to $\ast$ if any of the following conditions is verified: 
\begin{itemize}
	\item[(i)] for any mass functions $m_1, m_2$ and $m$ on $\Omega$:
\begin{equation}
 d\left( m_1 \ast m , m_2 \ast m \right) \leq d\left( m_1,m_2\right). \label{eq:consit}
\end{equation}
	\item[(ii)] for any mass function $m$, the mapping $F_m : \mathcal{M} \longrightarrow \mathcal{M}$ of the form
\begin{equation} F_m \left( m_0 \right) = m_0 \ast m \end{equation}
is Lipschitz continuous with $1$ as Lipschitz constant.
\end{itemize}

\end{mdef}
The equivalence between the two conditions follows from the very definition of Lipschitz continuity with $1$ as Lipschitz constant. Indeed, for mapping $F_m$ to qualify as such, it means that we have

$$ d \left( F_m \left( m_1 \right) , F_m \left( m_2 \right)  \right) \leq d \left( m_1,m_2 \right)  $$
for any pair of mass functions $\left( m_1,m_2 \right) 	$. 
In the remainder of this article, we will refer to this property either as consistency property between an operator and distance or, by small abuse of language, as \textbf{1-Lipschitz continuity} of an operator w.r.t. to a distance.

Under this property, repeated combinations with a given mass function $m$ cannot pull away any pair of mass functions. Such mappings are also called non expansive or short maps. Lipschitz continuity is stronger than uniform continuity. In particular, it implies a form of regularity for the corresponding combination mechanism in the sense that the norm of its gradient is bounded by $1$ meaning that the combined mass function does not change very fast or wiggle in the vicinity of functions $m_0$ or $m$.  

From an informative content standpoint, this property also has an impact. Suppose a mass function $m$ is separable\footnote{Shafer \cite{Sha76} introduced this terminology for decompositions w.r.t. Dempster's rule but we understand it in a more general perspective here by considering decompositions w.r.t. some arbitrary rule $\ast$.}, i.e. the combination under rule $\ast$ of elementary pieces of information embodied by simple mass functions yields function $m$. Using a consistent distance w.r.t. $\ast$, mass functions are all the closer as their decompositions involve identical elementary components.

%Also, observe that if inequality (\ref{eq:consit}) is strict, than mappings $F_m$ are contractions and repeated combinations with a given mass function $m$ converge to a limiting mass function. 

Proving that a fusion operator achieves Lipschitz continuity is not trivial because $\mathcal{M}$ is not finite but instead a compact subset of an uncountable space. In \cite{Loudahi2014}, Loudahi et al. established the consistency of the $L_1$ and $L_{\infty}$ based specialization distances w.r.t. the conjunctive and disjunctive rules. Numerical experiments also show that the $L_2$ based specialization distance is not consistent w.r.t. the conjunctive or disjunctive rule in the sense of definition \ref{def:weak_consist}. The experiments also show that Jousselme distance is not consistent w.r.t. the conjunctive rule. Its consistency with the disjunctive rule remains undetermined.

Observe that we compare random sets only in distribution here. One could think of other generalizations of properties $(\text{a})$ or $(\text{b})$ in probability or in an almost sure fashion.
% subsection a_definition_of_consistency_between_distances_and_combinations_rules (end)

\section{Lipschitz continuity results}
\label{sec:consistency_with_conj}

 In this section, we provide new Lipschitz continuity results of $L_k$ norm based distances between commonality or plausibility functions with the conjunctive and disjunctive rules.

 \subsection{Main results on Lipschitz continuity for the conjunctive rule} % (fold)
 \label{sub:main_results_on_consistent_distance_w_r_t_}
 
 % subsection main_results_on_consistent_distance_w_r_t_ (end)

\begin{prop}\label{prop:q_dist_conj}
 For $1 \leq k \leq \infty$, $\ocap$ is 1-Lipschitz continuous w.r.t. the $L_k$ norm based $q$-distance $d_{q,k}$.
\end{prop}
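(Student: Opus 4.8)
The plan is to exploit the multiplicative structure of commonality functions under the conjunctive rule, namely the identity $q_{1\cap 2}(A) = q_1(A)\, q_2(A)$ for all $A \subseteq \Omega$. Fix an arbitrary mass function $m$ with commonality $q$, and take two mass functions $m_1, m_2$ with commonalities $q_1, q_2$. The combined mass functions $m_1 \ocap m$ and $m_2 \ocap m$ then have commonalities $A \mapsto q_1(A)q(A)$ and $A \mapsto q_2(A)q(A)$ respectively. Hence, for any finite $k$,
\begin{equation}
\left\| \mathbf{q}_{1\cap} - \mathbf{q}_{2\cap}\right\|_k^k = \sum_{A\subseteq\Omega} |q_1(A) - q_2(A)|^k \, |q(A)|^k .
\end{equation}
Since every commonality value satisfies $0 \leq q(A) \leq 1$ (it is a sum of a subcollection of masses, and $q(\emptyset)=1$), we have $|q(A)|^k \leq 1$ for every $A$, so the sum above is termwise bounded by $\sum_{A\subseteq\Omega} |q_1(A)-q_2(A)|^k = \left\|\mathbf{q}_1 - \mathbf{q}_2\right\|_k^k$. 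Taking $k$-th roots and dividing by the common normalization constant $\rho = (N-1)^{1/k}$ (which is the same for $d_{q,k}$ regardless of which pair of mass functions is compared, by the preceding lemma) yields $d_{q,k}(m_1\ocap m, m_2 \ocap m) \leq d_{q,k}(m_1, m_2)$.

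For the case $k = \infty$, the argument is the same with sums replaced by suprema: $\left\|\mathbf{q}_{1\cap} - \mathbf{q}_{2\cap}\right\|_\infty = \sup_{A} |q_1(A)-q_2(A)|\,|q(A)| \leq \sup_A |q_1(A) - q_2(A)| = \left\|\mathbf{q}_1 - \mathbf{q}_2\right\|_\infty$, and here $\rho = 1$. So condition (i) of Definition \ref{def:weak_consist} holds for all $1 \leq k \leq \infty$, which by the stated equivalence is exactly 1-Lipschitz continuity of $\ocap$ w.r.t. $d_{q,k}$. It is worth noting that the generalization to combining more than two mass functions, or to combining with a fixed $m$ that is itself a product of several commonalities, is immediate since the bound only uses $0 \le q(A) \le 1$.

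I do not expect a serious obstacle here: the only things to check carefully are (1) that $0 \le q(A) \le 1$ holds for every $A$ including the subtlety that $q(\emptyset) = \sum_{E} m(E) = 1$, so the $A=\emptyset$ term contributes its full weight but the bound $|q(\emptyset)|^k \le 1$ still holds; (2) that the normalization constant $\rho$ is genuinely independent of the compared pair, which is what the preceding lemma guarantees, so that dividing both sides by the same $\rho$ preserves the inequality; and (3) handling $k=\infty$ separately. The essential mechanism — that pointwise multiplication by a vector of entries in $[0,1]$ is a non-expansive operation in every $L_k$ norm — is elementary, so the proof is short. If anything, the mild care needed is purely bookkeeping about which norm and which $\rho$ is in play, not a real mathematical difficulty.
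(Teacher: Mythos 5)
Your proof is correct and follows essentially the same route as the paper's: factor $q_{i\cap 3}(A)=q_i(A)q_3(A)$, bound termwise using $0\leq q_3(A)\leq 1$ for the finite-$k$ case, and repeat with the maximum for $k=\infty$. The only cosmetic difference is that you explicitly track the normalization constant $\rho$, which the paper's proof leaves implicit.
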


See \ref{ap:q_dist_conj} for proof.

\begin{prop}\label{prop:pl_dist_conj}
 $\ocap$ is 1-Lipschitz continuous w.r.t. the $L_\infty$ norm based $pl$-distance $d_{pl,\infty}$.
\end{prop}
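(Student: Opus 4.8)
The plan is to reduce the claim for $\ocap$ and $d_{pl,\infty}$ to a statement about plausibility functions, exactly paralleling the commonality case of Proposition \ref{prop:q_dist_conj} but using the De Morgan duality \eqref{eq:de_morgan} to pass from commonalities to plausibilities. Since $\rho = 1$ for the $L_\infty$ case (by the lemma on normalization constants), the assertion to prove is: for all mass functions $m_1, m_2, m$,
\begin{equation}
\max_{A \subseteq \Omega} \left| pl_{1 \cap m}(A) - pl_{2 \cap m}(A) \right| \leq \max_{A \subseteq \Omega} \left| pl_1(A) - pl_2(A) \right|,
\end{equation}
where I write $pl_{i \cap m}$ for the plausibility of $m_i \ocap m$. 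The first step is to express $pl_{i \cap m}$ in a tractable form. Rather than attack plausibilities directly (they do not multiply under $\ocap$), I would use \eqref{eq:pltob}: $pl_i(A) = 1 - b_i(A^c)$, so the difference $pl_{1\cap m}(A) - pl_{2 \cap m}(A) = b_{2 \cap m}(A^c) - b_{1 \cap m}(A^c)$, and the whole inequality becomes a statement about implicability functions of conjunctive combinations. This is where Lemma \ref{lem:implic} (the conditioning identity $b_{1|E}(A) = b_1(E^c \cup A)$) should enter.

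The key computation is to decompose $m_1 \ocap m$ over the focal elements of $m$. Writing $\mathbf{m} = \sum_{E \subseteq \Omega} m(E)\, \mathbf{m}_E$ and using bilinearity of $\ocap$, we get $m_i \ocap m = \sum_E m(E)\, (m_i \ocap m_E) = \sum_E m(E)\, m_{i|E}$, hence by linearity of the implicability functional $b_{i \cap m}(B) = \sum_E m(E)\, b_{i|E}(B)$. Applying Lemma \ref{lem:implic} with $B = A^c$ gives $b_{i \cap m}(A^c) = \sum_E m(E)\, b_i(E^c \cup A^c) = \sum_E m(E)\, b_i((E \cap A)^c)$. Therefore
\begin{equation}
pl_{1 \cap m}(A) - pl_{2 \cap m}(A) = \sum_{E \subseteq \Omega} m(E) \left( b_2\bigl((E \cap A)^c\bigr) - b_1\bigl((E \cap A)^c\bigr) \right) = \sum_{E \subseteq \Omega} m(E) \left( pl_1(E \cap A) - pl_2(E \cap A) \right).
\end{equation}
Now the bound is immediate by the triangle inequality: since $m(E) \geq 0$ and $\sum_E m(E) = 1$, the absolute value of the right-hand side is at most $\sum_E m(E)\, \bigl| pl_1(E \cap A) - pl_2(E \cap A) \bigr| \leq \max_{C \subseteq \Omega} \bigl| pl_1(C) - pl_2(C) \bigr|$, the last step because each $E \cap A$ is some subset $C$ of $\Omega$. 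Taking the maximum over $A$ on the left gives the claim.

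I expect the main obstacle to be purely a matter of getting the conditioning/complementation bookkeeping right: verifying that $b_{i \cap m}(B) = \sum_E m(E) b_{i|E}(B)$ (which needs the implicability functional, not the belief functional, to be linear in the mass vector — true because $b_i(B) = \sum_{F \subseteq B} m_i(F)$ is linear), and then chaining Lemma \ref{lem:implic} with the set identity $E^c \cup A^c = (E \cap A)^c$ and relation \eqref{eq:pltob} without sign errors. Everything after that decomposition is a one-line convexity argument. One should also note why this route does not obviously give the finite-$k$ case: there the triangle inequality produces $\bigl(\sum_A | \sum_E m(E)(\cdots) |^k\bigr)^{1/k}$, and pulling the convex combination out through the $\ell^k$ norm still leaves a sum over $E$ of terms $\| (pl_1 - pl_2) \text{ restricted to sets of the form } E \cap A \|_k$, each of which is bounded by $\|pl_1 - pl_2\|_k$ only if the map $A \mapsto E \cap A$ is injective — which it is not — so the argument genuinely collapses for $k < \infty$, consistent with the paper restricting Proposition \ref{prop:pl_dist_conj} to $k = \infty$.
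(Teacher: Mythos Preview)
Your proof is correct and follows essentially the same approach as the paper: both arguments rely on Lemma~\ref{lem:implic}, the $pl$--$b$ duality \eqref{eq:pltob}, and the decomposition of $m$ as a convex combination of categorical mass functions, followed by a convexity bound for the $L_\infty$ norm. The only difference is organizational: the paper first treats the categorical case $m_3 = m_E$ separately and then invokes the triangle inequality at the norm level via matrix notation, whereas you derive the explicit identity $pl_{1\cap m}(A) - pl_{2\cap m}(A) = \sum_E m(E)\bigl(pl_1(E\cap A) - pl_2(E\cap A)\bigr)$ in one pass --- a slightly more streamlined presentation of the same idea.
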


See \ref{ap:pl_dist_conj} for proof.

As compared to previous Lipschitz continuity results \cite{Loudahi2014,Lou16}, specialization distances have a greater time complexity as compared to commonality ones. Indeed, although the construction of specialization distances can be sped up \cite{Lou14b}, the time complexity for the specialization distance is quadratic in $N$. More precisely, the time complexity to build a specialization matrix is $O \left(   N^{\frac{\log \left( 3 \right) }{\log \left( 2 \right) }}\right) \approx  O \left(   N^{1.58}\right)$. However, computing the norm of such a matrix has time complexity $O \left( N^2 \right) $. The time complexity to compute a commonality function \cite{kennes2013computational} is $ O \left(    N \log \left( N \right) \right)$ while that of computing the norm of commonality is $O \left( N \right) $. Given relations \eqref{eq:pltob} and \eqref{eq:q_and_b}, the time complexity to compute a plausibility function is identical to that of commonality ones. Moreover, the memory complexity is obviously reduced as well.

\subsection{Lipschitz continuity for the conjunctive rule: numerical experiments and counter-examples}\label{subsec:consist_exp}
This subsection contains experiments illustrating the (in)consistency of several mass function distances with respect to the conjunctive rule. We generate randomly \cite{bur13} $1e4$ triplets of simple mass functions and check how many times inequality \eqref{eq:consit} is verified (when $*=\ocap$) for several distances. The corresponding success rates are reported in Table \ref{tab:res_conj}.

\begin{table}[!h]

% increase table row spacing, adjust to taste
\renewcommand{\arraystretch}{1.3}
\caption{Consistency rates for several mass function distances w.r.t. $\ocap$} 
\label{tab:res_conj}
\centering
% Some packages, such as MDW tools, offer better commands for making tables
% than the plain LaTeX2e tabular which is used here.
\resizebox{.89\textwidth}{!}{
\begin{tabular}{|c||c|c|c|c|c|c|c|c|}
\hline
 Distance & $d_J$ & $d_{q,1}$ & $d_{q,2}$ & $d_{q,\infty}$ & $d_{pl,1}$ & $d_{pl,2}$ & $d_{pl,\infty}$ & $d_{\text{spe}}$\\
\hline
Consistency rate  & $86.42\%$ & $100\%$ & $100\%$ & $100\%$ & $38.22\%$ & $63.60\%$ & $100\%$ & $100\%$ \\
\hline
\end{tabular}}
\end{table}

The results are compliant with proposition \ref{prop:q_dist_conj} and \ref{prop:pl_dist_conj} as all commonality distances and $d_{pl,\infty}$ achieve $100\%$ of success. The results also show that Jousselme distance and $L_1$ or $L_2$ norm based distances between plausibilities are not consistent with $\ocap$. The rates also show that the circumstances in which Lipschitz continuity does not hold for these distances are not rare events.

To get a better insight as to why $d_{pl,k}$ is not consistent with $\ocap$ when $k$ is finite, we provide the following counter-example:
\begin{ex}
Let $\Omega = \left\{ a,b,c \right\}$. Suppose $m_1 = \frac{1}{2} m_{\left\{ a,b \right\}} + \frac{1}{2} m_\Omega$, $m_2 = \frac{1}{2} m_{\left\{ a,c \right\}} + \frac{1}{2} m_\Omega$ and $m_3 = m_{\left\{ b \right\}}$. By conjunctive combination, we obtain
\begin{align}
m_{1} \ocap m_{3} &= m_{\left\{ b \right\}} ,\\
\text{and } m_{2} \ocap m_{3} &= \frac{1}{2} m_{\left\{ b \right\}} + \frac{1}{2} m_\emptyset.
\end{align}
The plausibilities are 
\begin{center}
\begin{tabular}{ccccccccc}
& $\emptyset$ & $\left\{ a \right\}$ & $\left\{ b \right\}$ & $\left\{ a,b \right\}$ & $\left\{ c \right\}$ & $\left\{ a,c \right\}$ & $\left\{ b,c \right\}$ & $\Omega$ \\ \hline
$pl_1$ & 0 & 1 & 1 & 1 & $\frac{1}{2}$ & 1 & 1 & 1 \\
$pl_2$ & 0 & 1 & $\frac{1}{2}$ & 1 & 1 & 1 & 1 & 1 \\
$pl_{1\cap 3}$ & 0 & 0 & 1 & 1 & 0 & 0 & 1 & 1 \\
$pl_{2\cap 3}$ & 0 & 0 & $\frac{1}{2}$ & $\frac{1}{2}$ & 0 & 0 & $\frac{1}{2}$ & $\frac{1}{2}$ 
\end{tabular}
\end{center}
We see that 
\begin{align}
d_{pl,k}\left( m_1,m_2 \right) &= \frac{1}{\rho} \left( 2 \left( \frac{1}{2} \right)^k  \right)^{1/k} \\
\text{while }  d_{pl,k}\left( m_{1} \ocap m_{3},m_{2} \ocap m_{3} \right) &= \frac{1}{\rho} \left( 4 \left( \frac{1}{2} \right)^k  \right)^{1/k}.
\end{align}
This counter-example tends to show that the inconsistency of these distances lies (at least partially) in the way that the mass of the empty set is assigned by the conjunctive rule.
\end{ex}

%\section{Lipschitz continuity and disjunctions} % (fold)
%\label{sec:consistency_with_the_disjunctive_rule}

 %In this section, we provide new Lipschitz continuity results of $L_k$ norm based distances between commonality and plausibility functions with the disjunctive rule.

\subsection{Main results on Lipschitz continuity for the disjunctive rule} % (fold)
\label{sub:main_results_on_consistent_distances_w_r_t_the_disjunctive_rule}

% subsection main_results_on_consistent_distances_w_r_t_the_disjunctive_rule (end)

\begin{prop}\label{prop:pl_dist_disj}
 For $1 \leq k \leq \infty$, $\ocup$ is 1-Lipschitz continuous w.r.t. the $L_k$ norm based $pl$-distance $d_{pl,k}$.
\end{prop}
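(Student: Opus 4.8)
The plan is to sidestep plausibility functions entirely and argue with implicability functions, exploiting the two facts recalled above: relation \eqref{eq:pltob} makes $d_{pl,k}=d_{b,k}$ as maps on $\mathcal{M}\times\mathcal{M}$ (indeed $|pl_1\pare{A}-pl_2\pare{A}|=|b_1\pare{A^c}-b_2\pare{A^c}|$ for every $A$, so the raw $L_k$ norms — and hence the normalization constants $\rho$ — coincide), and the disjunctive rule acts by elementwise multiplication on implicabilities, $b_{1\cup 2}\pare{A}=b_1\pare{A}b_2\pare{A}$ for all $A\subseteq\Omega$. Hence it suffices to prove $d_{b,k}\pare{m_1\ocup m,\,m_2\ocup m}\le d_{b,k}\pare{m_1,m_2}$ for arbitrary mass functions $m_1,m_2,m$, and then invoke the equivalence of conditions (i) and (ii) in Definition \ref{def:weak_consist}.

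First I would fix $m$ and write $b$ for its implicability function. For every $A\subseteq\Omega$, multiplicativity gives
\[
b_{1\cup m}\pare{A}-b_{2\cup m}\pare{A}=b\pare{A}\,\bigl(b_1\pare{A}-b_2\pare{A}\bigr).
\]
Since $b\pare{A}=\sum_{E\subseteq A}m\pare{E}\in\hook{0,1}$ (an implicability function is a partial sum of a mass function, so it lies in $[0,1]$), this yields the pointwise domination $\left|b_{1\cup m}\pare{A}-b_{2\cup m}\pare{A}\right|\le\left|b_1\pare{A}-b_2\pare{A}\right|$ for every $A\subseteq\Omega$, the case $A=\emptyset$ included. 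Taking the $L_k$ norm over all $A\subseteq\Omega$ — a sum of $k$-th powers when $k<\infty$, a supremum when $k=\infty$ — is monotone with respect to pointwise domination of nonnegative vectors, so $\left\|\mathbf{b}_{1\cup m}-\mathbf{b}_{2\cup m}\right\|_k\le\left\|\mathbf{b}_1-\mathbf{b}_2\right\|_k$; dividing both sides by the common normalization constant $\rho$ gives the desired inequality, uniformly in $k\in\hook{1,\infty}$.

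I do not expect a genuine obstacle: the only "idea" is the change of representation from $pl$ to $b$, after which the estimate is a one-line contraction bound. As a consistency check, the statement can also be derived from Proposition \ref{prop:q_dist_conj} by De Morgan duality: from \eqref{eq:de_morgan} one has $m_i\ocup m=\overline{\overline{m}_i\ocap\overline{m}}$, and \eqref{eq:q_and_b} ties the implicability of a mass function at a set to the commonality of its negation at the complement, so that the $pl$-distance on the disjunctive side of $m_1,m_2$ coincides with the $q$-distance on the conjunctive side of $\overline{m}_1,\overline{m}_2$; the direct computation above is, however, shorter and makes the unit Lipschitz constant transparent. I would finish by remarking that this recovers Proposition \ref{prop:pl_dist_conj} only in the limiting case $k=\infty$, explaining why the finite-$k$ behaviour genuinely differs between $\ocap$ and $\ocup$.
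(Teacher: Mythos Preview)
Your proof is correct and follows essentially the same approach as the paper: both convert plausibilities to implicabilities via \eqref{eq:pltob}, exploit the multiplicativity $b_{1\cup m}(A)=b(A)\,b_1(A)$, and bound using $b(A)\in[0,1]$. Your presentation is slightly more streamlined in that you treat finite $k$ and $k=\infty$ uniformly via monotonicity of the $L_k$ norm under pointwise domination, whereas the paper writes out the two cases separately; the De Morgan consistency check you add is a nice extra but not in the paper's proof.
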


See \ref{ap:pl_dist_disj} for proof.

\begin{prop}\label{prop:q_dist_disj}
 $\ocup$ is 1-Lipschitz continuous w.r.t the $L_\infty$ norm based $q$-distance $d_{q,\infty}$.
\end{prop}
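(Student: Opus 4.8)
The plan is to deduce the statement from the already-established $1$-Lipschitz continuity of $\ocap$ with respect to $d_{pl,\infty}$ (Proposition \ref{prop:pl_dist_conj}), by using the negation operator to swap the roles of intersection and union. Throughout I would rely on three facts already available: $m\mapsto\overline{m}$ is an involutive bijection of $\mathcal{M}$; $d_{pl,k}=d_{b,k}$ for every $k$ (via relation \eqref{eq:pltob}); and, as computed earlier, the normalization constants of $d_{q,\infty}$ and of $d_{b,\infty}=d_{pl,\infty}$ are both equal to $1$, so that all the $L_\infty$-distances appearing below are simply suprema of coordinatewise absolute differences.

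First I would establish the isometry relating $q$-distances and $b$-distances through negation. By relation \eqref{eq:q_and_b}, $q_i(A)=\overline{b}_i\pare{A^c}$ for every $A\subseteq\Omega$, where $\overline{b}_i$ is the implicability function in correspondence with $\overline{m}_i$. Since $A\mapsto A^c$ is a bijection of $2^\Omega$, taking the supremum over $A\subseteq\Omega$ of the identity $|q_1(A)-q_2(A)|=|\overline{b}_1\pare{A^c}-\overline{b}_2\pare{A^c}|$ gives
\[
d_{q,\infty}\pare{m_1,m_2}=d_{b,\infty}\pare{\overline{m}_1,\overline{m}_2}=d_{pl,\infty}\pare{\overline{m}_1,\overline{m}_2}.
\]
Next I would combine this with the De Morgan relation \eqref{eq:de_morgan}: applying \eqref{eq:de_morgan} with the pair $\pare{\overline{m}_1,\overline{m}}$ in place of $\pare{m_1,m_2}$ and negating both sides (negation being an involution) yields $\overline{m_1\ocup m}=\overline{m}_1\ocap\overline{m}$, and similarly $\overline{m_2\ocup m}=\overline{m}_2\ocap\overline{m}$. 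Hence, for arbitrary mass functions $m_1,m_2,m$ on $\Omega$,
\begin{align*}
d_{q,\infty}\pare{m_1\ocup m,\,m_2\ocup m}
&=d_{pl,\infty}\pare{\overline{m_1\ocup m},\,\overline{m_2\ocup m}}\\
&=d_{pl,\infty}\pare{\overline{m}_1\ocap\overline{m},\,\overline{m}_2\ocap\overline{m}}\\
&\leq d_{pl,\infty}\pare{\overline{m}_1,\overline{m}_2}\\
&=d_{q,\infty}\pare{m_1,m_2},
\end{align*}
where the outer equalities are the isometry just derived and the inequality is Proposition \ref{prop:pl_dist_conj}. This is precisely condition (i) of Definition \ref{def:weak_consist} with $\ast=\ocup$, giving the claimed $1$-Lipschitz continuity.

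I do not expect a genuine analytic obstacle here once Proposition \ref{prop:pl_dist_conj} is in hand; the only points requiring care are formal bookkeeping: checking that the relabelling $A\mapsto A^c$ leaves the $L_\infty$ norm and the normalization constant invariant, and tracking the De Morgan/negation manipulation so that a union on the left-hand side really becomes an intersection on the right-hand side. It is worth noting that this derivation is the exact mirror of the one obtaining Proposition \ref{prop:pl_dist_disj} from Proposition \ref{prop:q_dist_conj}; since Proposition \ref{prop:pl_dist_conj} holds only for $k=\infty$ (and indeed fails for finite $k$, as the counter-example in Section \ref{subsec:consist_exp} shows), the same duality explains why the present statement does not extend to finite $k$.
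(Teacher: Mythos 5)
Your proof is correct and follows essentially the same route as the paper's: both pass to negated mass functions via relation \eqref{eq:q_and_b} to identify $d_{q,\infty}$ with $d_{pl,\infty}$ on negations, apply the De Morgan relation \eqref{eq:de_morgan} to turn the disjunctive combination into a conjunctive one, and then invoke Proposition \ref{prop:pl_dist_conj}. The only (inconsequential) inaccuracy is the closing aside: the paper proves Proposition \ref{prop:pl_dist_disj} by a direct computation on implicabilities rather than by dualizing Proposition \ref{prop:q_dist_conj}.
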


See \ref{ap:q_dist_disj} for proof.

The same type of arguments outlining the added value of our new Lipschitz continuity results in the conjunctive case also hold in the disjunctive one. The distances between plausibilities or commonalities have a smaller time and memory complexities as compared to the specialization distances. It must be noted that $d_{\text{spe},1}$, $d_{\text{spe},\infty}$, $d_{q,\infty}$ and $d_{pl,\infty}$ are the only distances that are reported to be consistent with both the conjunctive and disjunctive rules. As the numerical experiments presented in the next paragraph will show, $L_k$ norm based distances between commonalities are not consistent with $\ocup$ when $k$ is finite. The counter-example presented in  \ref{subsec:consist_exp} proves that $L_k$ norm based distances between plausibilities are not consistent with $\ocap$ when $k$ is finite.
% section consistency_with_the_disjunctive_rule (end)

\subsection{Lipschitz continuity for the disjunctive rule: numerical experiments and counter-examples}\label{subsec:consist_exp_disj}
This subsection contains experiments illustrating the (in)consistency of several mass function distances with respect to the disjunctive rule. We generate randomly \cite{bur13} $1e4$ triplets of simple mass functions and check how many times inequality \eqref{eq:consit} is verified (when $*=\ocup$) for several distances. The corresponding success rates are reported in Table \ref{tab:res_disj}.

\begin{table}[!h]

% increase table row spacing, adjust to taste
\renewcommand{\arraystretch}{1.3}
\caption{Consistency rates for several mass function distances w.r.t. $\ocup$} 
\label{tab:res_disj}
\centering
% Some packages, such as MDW tools, offer better commands for making tables
% than the plain LaTeX2e tabular which is used here.
\resizebox{.89\textwidth}{!}{
\begin{tabular}{|c||c|c|c|c|c|c|c|c|}
\hline
 Distance & $d_J$ & $d_{q,1}$ & $d_{q,2}$ & $d_{q,\infty}$ & $d_{pl,1}$ & $d_{pl,2}$ & $d_{pl,\infty}$ & $d_{\text{spe}}$\\
\hline
Consistency rate  & $100\%$ & $94.76\%$ & $94.09\%$ & $100\%$ & $100\%$ & $100\%$ & $100\%$ & $100\%$ \\
\hline
\end{tabular}}
\end{table}

The results are compliant with proposition \ref{prop:pl_dist_disj} and \ref{prop:q_dist_disj} as all plausibility distances and $d_{q,\infty}$ achieve $100\%$ of success. The results also show that $L_1$ or $L_2$ norm based distances between commonalities are not consistent with $\ocup$. The consistency of Jousselme distance can be conjectured. This distance also achieves $100\%$ of success if one draws random mass functions and not just random simple mass functions.

To get a better insight as to why $d_{q,k}$ is not consistent with $\ocup$ when $k$ is finite, we provide the following counter-example:
\begin{ex}
Let $\Omega = \left\{ a,b,c \right\}$. Suppose $m_1 =  m_{\left\{ a \right\}} $, $m_2 =  m_{\left\{ a,c \right\}} $ and $m_3 = m_{\left\{ b \right\}}$. By disjunctive combination, we obtain
\begin{align}
m_{1} \ocup m_{3} &= m_{\left\{ a,b \right\}} ,\\
\text{and } m_{2} \ocup m_{3} &=  m_\Omega.
\end{align}
The commonalities are 
\begin{center}
\begin{tabular}{ccccccccc}
& $\emptyset$ & $\left\{ a \right\}$ & $\left\{ b \right\}$ & $\left\{ a,b \right\}$ & $\left\{ c \right\}$ & $\left\{ a,c \right\}$ & $\left\{ b,c \right\}$ & $\Omega$ \\ \hline
$q_1$ & 1 & 1 & 0 & 0 & 0 & 0 & 0 & 0 \\
$q_2$ & 1 & 1 & 0 & 0 & 1 & 1 & 0 & 0 \\
$q_{1\cup 3}$ & 1 & 1 & 1 & 1 & 0 & 0 & 0 & 0 \\
$q_{2\cup 3}$ & 1 & 1 & 1 & 1 & 1 & 1 & 1 & 1 
\end{tabular}
\end{center}
We see that 
\begin{align}
d_{q,k}\left( m_1,m_2 \right) &= \frac{1}{\rho} \left( 2   \right)^{1/k} \\
\text{while }  d_{q,k}\left( m_{1} \ocup m_{3},m_{2} \ocup m_{3} \right) &= \frac{1}{\rho} \left( 4   \right)^{1/k}.
\end{align}

\end{ex}

\section{Conflict degrees spanned by consistent distances with the conjunctive rule} % (fold)
\label{sec:conflict}

When information sources support antagonistic assumptions, it is important to provide a way to numerically assess the level of inconsistency in their respective messages. This is the purpose of degrees of conflict defined in the framework of belief functions. If such a degree of conflict is bounded, we can use different information fusion strategies in order to make more robust decisions. %Just for illustration, suppose we plan to invest in either real estate or in shares. We may incur a loss or a gain for either investment depending on the advent of some economical conditions. Suppose also two sources of information are assessing how likely are these economical conditions to occur. If the sources report contradictory information, a risk averse decision maker may just postpone the investment decision.

In the theory of belief functions, such a situation typically occurs when there is a pair of subsets $\left( A,B \right) $ such that $A\cap B = \emptyset$ and $m_1 \left( A \right)>0 $ and $m_2 \left( B \right) >0 $. 
%, i.e. when the corresponding random sets have empty intersections. 
In the following paragraphs, we give a brief reminder of existing conflict degrees in the belief function literature as well as desirable properties for such degrees. Next, we also comment on the advisability of building new degrees using distances that are consistent with $\ocap$.

% section new_conflict_degrees_relying_on_distances_between_commonalities (end)

\subsection{Assessing the degree of conflict between belief functions} % (fold)
\label{sub:assessing_the_degree_of_conflict}

In his pioneering article, Dempster \cite{Dem67} already provides a way to assess the degree of conflict between two mass functions. Let $\kappa$ denote this criterion which is known as \textbf{ Dempster's degree of conflict} and reads
\begin{equation}
	\kappa \left( m_1,m_2 \right) = m_{1 \cap 2} \left( \emptyset \right).  
\end{equation}

More recently, Destercke and Burger \cite{des2013} outline that this degree can be built upon a consistency measure $\phi$ which evaluates to what extent a single mass function is not self-contradictory. In the case of Dempster's degree of conflict, this measure is simply given by 
\begin{equation}
	\phi \left( m \right) = 1 - m \left( \emptyset \right).
\end{equation}

They also introduce the following strong consistency measure $\Phi$ which is such that
\begin{equation}
	\Phi \left( m \right) = \underset{a\in \Omega}{\max}\; pl \left( \left\{ a \right\} \right).
\end{equation}
This second measure is the $L_\infty$ norm of the contour function\footnote{The contour function is the restriction of the plausibility function to singletons.}. It is stronger in the sense that $\phi \left( m \right)<1 \Rightarrow \Phi \left( m \right) <1 $ while the opposite implication is not true. This means that $\Phi$ can detect a finer level of self-(in)consistencies in the information encoded by some mass function. 

Destercke and Burger \cite{des2013} also introduce the following list of desirable properties for some degree of conflict $\mathcal{C}$:
\begin{itemize}
 	\item [(i)] (extreme conflict values) $\mathcal{C} \left( m_1 , m_2 \right) = 0 $ iff $m_1$ and $m_2$ are non conflicting and $\mathcal{C} \left( m_1 , m_2 \right) = 1 $ iff $m_{1} \ocap m_2 = m_\emptyset$,
 	\item [(ii)] (symmetry) $\mathcal{C} \left( m_1 , m_2 \right) = \mathcal{C} \left( m_2 , m_1 \right) $,
 	\item [(iii)] (imprecision monotonicity) if $m_1 \sqsubseteq m'_1$ then $\mathcal{C} \left( m'_1 , m_2 \right) \leq \mathcal{C} \left( m_1 , m_2 \right) $,
 	\item [(iv)] (ignorance is bliss) $\mathcal{C} \left( m_1 , m_\Omega \right) = 1 - \mathfrak{I}\left( m_1 \right) $ where $\mathfrak{I}$ is a consistency measure such as the aforementioned ones,
 	\item [(v)] (invariance to refinement) for some multi-valued mapping $\rho : \Omega \rightarrow 2^\Theta$ with $|\Omega| < |\Theta| < \infty$ and a mass function $m'_1$ such that $m'_1 \left( \underset{a \in A}{\cup} \rho \left( a \right)  \right) = m_1 \left( A \right)  $ for any $A\subseteq \Omega$, we have $\mathcal{C} \left( m_1 , m_2 \right) = \mathcal{C} \left( m'_1 , m'_2 \right)$.
 \end{itemize} 

The definition of non-conflicting mass functions is not specified in property (i) because several such notions can be considered. The authors explain that if non-conflict means that the intersection of any focal element of $m_1$ with any focal element of $m_2$ is not empty then $\kappa$ satisfies each property with $\mathfrak{I} = \phi$. Moreover, if non-conflict means that the intersection of all the focal elements of both mass functions is not empty then $K \left( m_1,m_2 \right) = 1 - \underset{a\in \Omega}{\max}\; pl_{1\cap 2} \left( \left\{ a \right\} \right)  $ satisfies each property with $\mathfrak{I} = \Phi$. We will refer to $K$ as the \textbf{ degree of strong conflict}. The informational partial order in property (iii) is the specialization partial order \cite{Dub86b} for both degrees $\kappa$ and $K$.

% subsection assessing_the_degree_of_conflict (end)

\subsection{Deriving new degrees of conflict} % (fold)
\label{sub:deriving_new_degrees_of_conflict}

Prior to Destercke and Burger \cite{des2013}, several authors \cite{Mar08,LIU2006} proposed to derive new degrees of conflict to overcome the limitations of $\kappa$. Indeed, Dempster's degree of conflict evaluates two pairs of mass functions as equally conflicting as long as they assign the same mass to $\emptyset$ (after their respective conjunctive combinations). Let $m_\cap$ denote the conjunctive combination of the first pair and $m'_\cap$ the combination of the second one. Suppose the focal elements of $m_\cap$ are $\left\{ \emptyset,A \right\}$ and those of $m'_\cap$ are $\left\{ \emptyset,A,B \right\}$. If $A\cap B = \emptyset$, then $m'_\cap$ carries intuitively a higher level of inconsistency which $\kappa$ fails to grasp.

The degrees of conflict introduced in \cite{Mar08,LIU2006} are built using pairwise distances $d \left( m_1,m_2 \right) $. There are however several arguments \cite{des2013,burger2016geometric} outlining that this practice is ill advised. However, in a recent work, Pichon and Jousselme \cite{pichon2019several} highlighted that non-pairwise distances can be instrumental in the construction of degrees of conflict. The authors examine the distance between the conjunctive combination $m_{1\cap2}$ and some reference mass function, i.e. the total conflict mass function $m_\emptyset$. Indeed we have

\begin{align}
\kappa \left( m_1,m_2 \right) &= 1 - d_{pl,\infty} \left( m_{1\cap 2}, m_\emptyset \right). \label{eq:kappa_pl_inf}
\end{align}
Similarly, $K$ can be retrieved as the $L_\infty$ norm based distances between the contour functions of $m_1$ and $m_2$. This observation raises the following question: can we build other relevant degrees of conflict in the same fashion as in \eqref{eq:kappa_pl_inf} but using other distances than $d_{pl,\infty}$? We try to provide some answers to this question in the next paragraphs when the examined distances are consistent with $\ocap$.

\begin{prop}\label{rmk:1}
Let $d$ denote a mass function distance which is either an $L_k$ norm based distance between commonalities, plausibilities, or specialization matrices. Let $\mathcal{C} : \mathcal{M} \times \mathcal{M} \rightarrow \left[ 0;1 \right] $ denote the following mapping
\begin{equation}
	\mathcal{C} \left( m_1,m_2 \right)  = 1 - d \left( m_{1\cap 2},m_\emptyset \right). \label{eq:conf_deg}
\end{equation}
Then $\mathcal{C}$ does not satisfy property (i) if $k$ is finite.
\end{prop}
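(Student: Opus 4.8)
The plan is first to pin down which half of property (i) can possibly fail, and then to break it with a single elementary example. Observe that the second half of property (i) holds for \emph{every} metric $d$ on $\mathcal{M}$: by the definiteness axiom of Definition \ref{Def1}, $\mathcal{C}\pare{m_1,m_2}=1$ if and only if $d\pare{m_{1\cap 2},m_\emptyset}=0$, i.e. if and only if $m_{1\cap 2}=m_\emptyset$. Hence, if $\mathcal{C}$ violates property (i), it can only be through the first half, i.e. through the equivalence ``$\mathcal{C}\pare{m_1,m_2}=0$ iff $m_1,m_2$ are non conflicting''. By the definition of $\mathcal{C}$, this amounts to requiring that $d\pare{m_{1\cap 2},m_\emptyset}=1$ precisely when $m_1,m_2$ are non conflicting, that is, precisely when $m_{1\cap 2}$ lies at the full diameter of $\mathcal{M}$ away from $m_\emptyset$.

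I would then defeat this by exhibiting a mass function sitting strictly inside $\mathcal{M}$ that nonetheless arises as the conjunctive combination of a non conflicting pair. On any $\Omega$ with $n\geq 2$, fix $a\in\Omega$ and take $m_1=m_2=m_{\braces{a}}$. This pair is non conflicting under every notion of non-conflict recalled in Section \ref{sec:conflict}: its only pair of focal elements is $\pare{\braces{a},\braces{a}}$ and $\braces{a}\cap\braces{a}=\braces{a}\neq\emptyset$, so both the ``any two focal elements meet'' and the ``all focal elements share a common point'' readings are satisfied. Moreover $m_{1\cap 2}=m_{\braces{a}}\neq m_\emptyset$. It therefore suffices to check that $d\pare{m_{\braces{a}},m_\emptyset}<1$ for each of the three distance families at finite $k$; this gives $\mathcal{C}\pare{m_1,m_2}=1-d\pare{m_{\braces{a}},m_\emptyset}>0$ for a non conflicting pair, contradicting property (i).

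The three checks are routine counting arguments that use the fact, recalled in Section \ref{subsec:review} from Proposition 2 of \cite{KLEIN201615}, that the normalizing constant $\rho$ is the diameter of $\mathcal{M}$ and is attained at $\pare{m_\Omega,m_\emptyset}$. For the plausibility distance, $pl_{\braces{a}}-pl_\emptyset$ is supported on the $2^{n-1}$ nonempty subsets containing $a$, where it equals $1$, so the $k$-th power of its $L_k$ norm is $2^{n-1}<2^n-1=\rho^k$. For the commonality distance, $q_{\braces{a}}-q_\emptyset$ is supported on $\braces{a}$ alone, so the $k$-th power of its $L_k$ norm is $1<2^n-1=\rho^k$. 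For the specialization distance, using $m_{\braces{a}\,|\,B}=m_{\braces{a}\cap B}$ and $m_{\emptyset\,|\,B}=m_\emptyset$, the matrix $\mathbf{S}_{\braces{a}}-\mathbf{S}_\emptyset$ has exactly two entries of modulus $1$ in each of the $2^{n-1}$ columns indexed by a set containing $a$ and vanishes elsewhere, so the $k$-th power of its $L_k$ norm is $2^n<2^{n+1}-2=2\pare{N-1}=\rho^k$ whenever $n\geq 2$. In every case $d\pare{m_{\braces{a}},m_\emptyset}<1$, which completes the argument.

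I do not expect a genuine obstacle here: the only delicate points are (i) ensuring the counterexample is non conflicting for \emph{every} admissible reading of ``non conflicting'' in property (i), which is why a mass function with a single nonempty focal element is used, and (ii) staying in dimension $n\geq 2$, the case $n=1$ being degenerate since then $m_{\braces{a}}=m_\Omega$ and $d\pare{m_\Omega,m_\emptyset}=1$. One could equally well replace $m_{\braces{a}}$ by any consistent non-vacuous mass function and rerun the same computation; the singleton choice is made only so that the norms are immediate. Finiteness of $k$ is genuinely used: for $k=\infty$ one has $\rho=1$ and $d\pare{m_{\braces{a}},m_\emptyset}=1$ in all three cases, in agreement with the fact that $d_{pl,\infty}$ does satisfy property (i).
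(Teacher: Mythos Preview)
Your argument is correct. Both you and the paper identify that the second half of property (i) is automatic from definiteness, and that the failure must therefore lie in the equivalence ``$\mathcal{C}=0$ iff non conflicting''. The difference is in how this failure is exhibited. The paper proceeds by characterization: it shows that for finite $k$ one has $d\pare{m_{1\cap 2},m_\emptyset}=1$ \emph{only if} $m_{1\cap 2}=m_\Omega$, hence only if $m_1=m_2=m_\Omega$, and then observes that ``$m_1=m_2=m_\Omega$'' is not an acceptable definition of non-conflict. You instead proceed by explicit counterexample: you pick the non conflicting pair $m_1=m_2=m_{\braces{a}}$ and compute directly that $d\pare{m_{\braces{a}},m_\emptyset}<1$ for each of the three distance families. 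Your route is more elementary and self-contained (no need to argue that the maximum is uniquely attained at $m_\Omega$), while the paper's route yields the extra information that $\mathcal{C}=0$ characterizes precisely the vacuous pair. Your counting computations for the three families are correct, as is your remark that $n\geq 2$ is needed and that the argument collapses at $k=\infty$ because then $\rho=1$ and the relevant norms all equal $1$.
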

See \ref{sec:proof_of_proposition_rmk:1} for a proof.

From the above result, building a conflict degree using \eqref{eq:conf_deg} using $d_{q,k}$,  $d_{pl,k}$ or $d_{\text{spe},k}$ is ill-advised whenever $k\neq \infty$. Intuitively, degrees of conflict relying on $L_\infty$ are better candidates to verify (i) because the maximal norm value is not uniquely achieved for $m_{1\cap 2} = m_\Omega$.

\begin{prop}\label{rmk:2}
Let $\mathcal{C}_{q,\infty} : \mathcal{M} \times \mathcal{M} \rightarrow \left[ 0;1 \right] $ denote the following mapping
\begin{equation}
	\mathcal{C} _{q,\infty}\left( m_1,m_2 \right)  = 1 - d_{q,\infty} \left( m_{1\cap 2},m_\emptyset \right). \label{eq:conf_deg-q_inf}
\end{equation}
Then $\mathcal{C}_{q,\infty}$ coincides with the degree of strong conflict $K$.
\end{prop}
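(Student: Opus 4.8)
The plan is to unwind both sides of the claimed identity and show they reduce to the same expression. Recall that $K(m_1,m_2) = 1 - \max_{a\in\Omega} pl_{1\cap 2}(\{a\})$ by the definition given just before this proposition, so it suffices to show that $d_{q,\infty}(m_{1\cap 2}, m_\emptyset) = \max_{a\in\Omega} pl_{1\cap 2}(\{a\})$. Writing $q$ for the commonality function of $m_{1\cap 2}$ and noting that the commonality function of $m_\emptyset$ is the indicator of $\emptyset$ (i.e. $q_\emptyset(\emptyset)=1$ and $q_\emptyset(A)=0$ for all $A\neq\emptyset$), and using that the normalization constant for the $L_\infty$-based $q$-distance is $\rho=1$ (from the lemma on $\rho$), we get
\begin{equation}
 d_{q,\infty}(m_{1\cap 2}, m_\emptyset) = \max_{A\subseteq\Omega}\,\bigl| q(A) - q_\emptyset(A)\bigr| = \max\Bigl( |q(\emptyset)-1|,\ \max_{A\neq\emptyset} q(A)\Bigr).
\end{equation}

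The next step is to observe that $q(\emptyset) = \sum_{E\supseteq\emptyset} m_{1\cap 2}(E) = 1$, so the term $|q(\emptyset)-1|$ vanishes and the maximum is over $A\neq\emptyset$ only. Then I would use the standard fact that commonality is monotone nonincreasing with respect to inclusion: for any nonempty $A$, pick any $a\in A$; then $q(A)\le q(\{a\})$ since $\{A : E\supseteq A\} \subseteq \{E : E\supseteq\{a\}\}$. Hence $\max_{A\neq\emptyset} q(A) = \max_{a\in\Omega} q(\{a\})$. Finally, for singletons the commonality and plausibility functions coincide, $q(\{a\}) = \sum_{E\ni a} m_{1\cap 2}(E) = pl_{1\cap 2}(\{a\})$, so $\max_{A\neq\emptyset} q(A) = \max_{a\in\Omega} pl_{1\cap 2}(\{a\})$. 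Combining the two displays gives $d_{q,\infty}(m_{1\cap 2},m_\emptyset) = \max_{a\in\Omega} pl_{1\cap 2}(\{a\})$, whence $\mathcal{C}_{q,\infty}(m_1,m_2) = 1 - \max_{a\in\Omega} pl_{1\cap 2}(\{a\}) = K(m_1,m_2)$.

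This proof is essentially a chain of elementary identities, so there is no serious obstacle; the only point requiring a little care is the reduction of the maximum over all nonempty subsets to the maximum over singletons, which rests on the monotonicity of commonality and the fact that every nonempty set contains a singleton. One should also double-check that the $L_\infty$ normalization constant is indeed $1$ (it is, by the preceding lemma) so that no spurious factor appears. With those two points verified the argument is complete.
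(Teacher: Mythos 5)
Your proof is correct and follows essentially the same route as the paper's: reduce $d_{q,\infty}(m_{1\cap 2},m_\emptyset)$ to $\max_{A\neq\emptyset} q_{1\cap 2}(A)$ using $q_\emptyset(A)=0$ for $A\neq\emptyset$ and $q(\emptyset)=1$, then invoke antitonicity of commonality to restrict to singletons, and finally use that $q$ and $pl$ coincide on singletons. The only blemish is a typo in the set-inclusion justifying monotonicity (it should read $\{E : E\supseteq A\}\subseteq\{E : E\supseteq\{a\}\}$), which does not affect the argument.
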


\begin{prop}\label{rmk:2_bis}
Let $\mathcal{C}_{\text{spe},\infty} : \mathcal{M} \times \mathcal{M} \rightarrow \left[ 0;1 \right] $ denote the following mapping
\begin{equation}
	\mathcal{C} _{\text{spe},\infty}\left( m_1,m_2 \right)  = 1 - d_{\text{spe},\infty} \left( m_{1\cap 2},m_\emptyset \right). \label{eq:conf_deg-q_inf}
\end{equation}
Then $\mathcal{C}_{\text{spe},\infty}$ coincides with Dempster's degree of conflict $\kappa$.

\end{prop}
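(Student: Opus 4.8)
The plan is to establish the stronger and cleaner fact that for \emph{every} mass function $m$ on $\Omega$ one has $d_{\text{spe},\infty}(m, m_\emptyset) = 1 - m(\emptyset)$. Specialising this to $m = m_{1\cap 2}$ and recalling that $m_{1\cap 2}(\emptyset) = \kappa(m_1,m_2)$ then immediately yields $\mathcal{C}_{\text{spe},\infty}(m_1,m_2) = 1 - d_{\text{spe},\infty}(m_{1\cap 2},m_\emptyset) = 1 - \bigl(1 - m_{1\cap 2}(\emptyset)\bigr) = \kappa(m_1,m_2)$, which is the claim.

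First I would unfold Definition \ref{def:mat_dist} in the case $k = \infty$: there $\rho = 1$ and the matrix $L_\infty$ norm is the entrywise maximum, so $d_{\text{spe},\infty}(m, m_\emptyset) = \max_{A,B\subseteq\Omega} \left| S_m(A,B) - S_\emptyset(A,B) \right|$, where $S_m(A,B) = (m \ocap m_B)(A)$ is the $(A,B)$ entry of the specialization matrix of $m$. Since $\max_{A,B} = \max_B \max_A$, and the $B$-th column of $\mathbf{S}_m$ is the conditional mass function $\mathbf{m}_{|B}$ with $m_{|B} := m \ocap m_B$ while every column of $\mathbf{S}_\emptyset$ equals $\mathbf{m}_\emptyset$ (because $m_\emptyset \ocap m_B = m_{\emptyset \cap B} = m_\emptyset$), this rewrites as $d_{\text{spe},\infty}(m, m_\emptyset) = \max_{B\subseteq\Omega} \left\| \mathbf{m}_{|B} - \mathbf{m}_\emptyset \right\|_\infty$, i.e.\ the largest entrywise gap between a conditioned copy of $m$ and the categorical mass function $m_\emptyset$.

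Next I would evaluate the inner norm. Because $\mathbf{m}_\emptyset$ is the base vector carrying a $1$ in coordinate $\emptyset$ and $0$ elsewhere, $\left\| \mathbf{m}_{|B} - \mathbf{m}_\emptyset \right\|_\infty = \max\!\bigl( 1 - m_{|B}(\emptyset),\, \max_{A\neq\emptyset} m_{|B}(A) \bigr)$. But $m_{|B}$ is itself a mass function, so $\sum_{A\neq\emptyset} m_{|B}(A) = 1 - m_{|B}(\emptyset)$, whence each term with $A\neq\emptyset$ is at most $1 - m_{|B}(\emptyset)$ and therefore $\left\| \mathbf{m}_{|B} - \mathbf{m}_\emptyset \right\|_\infty = 1 - m_{|B}(\emptyset)$. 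It remains to minimise $m_{|B}(\emptyset)$ over $B$. Expanding via the conjunctive rule, $m_{|B}(\emptyset) = (m \ocap m_B)(\emptyset) = \sum_{E : E\cap B = \emptyset} m(E) = \sum_{E\subseteq B^c} m(E)$, and since $\emptyset \subseteq B^c$ for every $B$ we get $m_{|B}(\emptyset) \geq m(\emptyset)$, with equality attained at $B = \Omega$. Hence $d_{\text{spe},\infty}(m, m_\emptyset) = \max_B \bigl( 1 - m_{|B}(\emptyset) \bigr) = 1 - m(\emptyset)$, completing the argument.

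There is no real obstacle here; the only two points needing a moment's care are the reduction of the matrix $L_\infty$ norm to a column-wise vector $L_\infty$ norm, and the observation that, $\mathbf{m}_\emptyset$ being a base vector, the entrywise distance from it to any probability vector is governed solely by that vector's mass on $\emptyset$. One could instead argue directly on matrix entries, using $S_\emptyset(A,B) = 1$ iff $A = \emptyset$ and splitting into the cases $A = \emptyset$, where the gap equals $pl_m(B) \leq pl_m(\Omega) = 1 - m(\emptyset)$, and $A \neq \emptyset$, where the gap $S_m(A,B)$ is bounded by $q_m(A) \leq q_m(\{a\}) = pl_m(\{a\}) \leq 1 - m(\emptyset)$ for any $a \in A$; but the column-wise reformulation is the shortest route.
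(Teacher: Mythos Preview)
Your proof is correct and follows essentially the same route as the paper's: reduce the matrix $L_\infty$ norm to a column-wise maximum $\max_B \|\mathbf{m}_{|B}-\mathbf{m}_\emptyset\|_\infty$, show each column norm equals $1-m_{|B}(\emptyset)$ via the mass-sum constraint, and then maximise over $B$. The only cosmetic differences are that you phrase the computation for a generic $m$ (then specialise to $m_{1\cap 2}$) and that you obtain $\min_B m_{|B}(\emptyset)=m(\emptyset)$ by the explicit formula $m_{|B}(\emptyset)=\sum_{E\subseteq B^c}m(E)$, whereas the paper simply invokes the monotonicity of the empty-set mass under further conjunctive combination.
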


See \ref{sec:proof_of_proposition_rmk:2} and \ref{sec:proof_of_proposition_rmk:2_bis} for proofs.

Proposition \ref{rmk:2} shows that the degree of strong conflict is retrieved through the $L_\infty$ norm based commonality distance while remark \ref{rmk:2_bis} shows that Dempster's degree of conflict is retrieved through the $L_\infty$ norm based specialization distance which are complementary observations in line with \cite{pichon2019several}. We continue with another more general remark, i.e. outside the sole scope of a given family of mass function distances.

\begin{prop}\label{rmk:3}
Let $d$ denote a mass function distance which is consistent w.r.t. $\ocap$. Let $\mathcal{C} : \mathcal{M} \times \mathcal{M} \rightarrow \left[ 0;1 \right] $ denote the mapping defined from \eqref{eq:conf_deg}.
Then $\mathcal{C}$ satisfies property (iii) w.r.t. the Dempsterian partial order $\sqsubseteq_d$.
\end{prop}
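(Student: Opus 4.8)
The goal is to show that if $d$ is consistent with $\ocap$ (in the sense of Definition \ref{def:weak_consist}), then the conflict degree $\mathcal{C}(m_1,m_2) = 1 - d(m_{1\cap 2}, m_\emptyset)$ satisfies property (iii): $m_1 \sqsubseteq_d m'_1 \Rightarrow \mathcal{C}(m'_1,m_2) \leq \mathcal{C}(m_1,m_2)$. Unfolding the definition of $\mathcal{C}$, this is equivalent to the inequality $d(m'_{1\cap 2}, m_\emptyset) \geq d(m_{1\cap 2}, m_\emptyset)$, where $m_{1\cap 2} = m_1 \ocap m_2$ and $m'_{1\cap 2} = m'_1 \ocap m_2$. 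So the whole statement reduces to a monotonicity fact about the map $m_1 \mapsto d(m_1 \ocap m_2, m_\emptyset)$ along the Dempsterian order.

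First I would recall what the Dempsterian (specialization) partial order $\sqsubseteq_d$ means concretely: $m_1 \sqsubseteq_d m'_1$ holds iff $m_1$ is a specialization of $m'_1$, equivalently iff $m_1 = m'_1 \ocap s$ for some mass function $s$ (or $\mathbf{m}_1 = \mathbf{S}\,\mathbf{m}'_1$ for a specialization matrix $\mathbf{S}$). This is the key structural input: being lower in $\sqsubseteq_d$ means ``obtained from above by a further conjunctive combination.'' Then, using associativity and commutativity of $\ocap$, we get $m_{1\cap 2} = m_1 \ocap m_2 = (m'_1 \ocap s) \ocap m_2 = (m'_1 \ocap m_2) \ocap s = m'_{1\cap 2} \ocap s$. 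Now apply the consistency hypothesis on $d$ with respect to $\ocap$, in the form of condition (i) of Definition \ref{def:weak_consist}, to the pair $(m'_{1\cap 2}, m_\emptyset)$ combined with $s$: since $m_\emptyset \ocap s = m_\emptyset$ (the empty set absorbs under intersection, i.e. $m_\emptyset$ is the zero of $\ocap$), we obtain
\begin{equation}
d\big(m'_{1\cap 2} \ocap s,\ m_\emptyset \ocap s\big) = d\big(m_{1\cap 2},\ m_\emptyset\big) \leq d\big(m'_{1\cap 2},\ m_\emptyset\big).
\end{equation}
Rearranging gives $\mathcal{C}(m'_1,m_2) = 1 - d(m'_{1\cap 2},m_\emptyset) \leq 1 - d(m_{1\cap 2},m_\emptyset) = \mathcal{C}(m_1,m_2)$, which is exactly (iii).

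The two small facts that need to be nailed down carefully are: (a) the characterization of $\sqsubseteq_d$ as ``$m_1 = m'_1 \ocap s$ for some $s$'' — this is standard in the belief-function literature (Dubois--Prade specialization order) and can be cited; and (b) $m_\emptyset \ocap s = m_\emptyset$ for every $s$, which is immediate from the definition of $\ocap$ since $A \cap B = \emptyset$ whenever $A = \emptyset$, so all mass lands on $\emptyset$. Neither is deep. The only genuine obstacle, and it is a presentational one rather than a mathematical one, is making sure the direction of the inequality in the consistency property lines up with the direction of the order $\sqsubseteq_d$ and with the sign flip coming from $\mathcal{C} = 1 - d(\cdot)$; it is easy to get a spurious reversal. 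I would also note that the argument uses only that $s$ exists, not anything about $d$ beyond 1-Lipschitz continuity w.r.t. $\ocap$, so the statement indeed holds for all the distances $d_{q,k}$ ($1\le k\le\infty$), $d_{pl,\infty}$, $d_{\text{spe},1}$, $d_{\text{spe},\infty}$ shown consistent earlier — this is worth remarking on right after the proof.
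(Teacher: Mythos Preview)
Your proof is correct and follows essentially the same route as the paper's: both use the characterization $m_1 \sqsubseteq_d m'_1 \Leftrightarrow m_1 = m'_1 \ocap s$ for some mass function $s$, then apply the consistency inequality to the pair $(m'_1 \ocap m_2,\, m_\emptyset)$ combined with $s$, exploiting $m_\emptyset \ocap s = m_\emptyset$. Your write-up is in fact a bit more explicit than the paper's (you spell out the absorbing property of $m_\emptyset$ and the associativity step), but the argument is identical.
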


See \ref{sec:proof_of_proposition_rmk:3} for a proof. 
Following proposition \ref{rmk:3}, it seems that in general, distances consistent w.r.t. $\ocap$ are good candidates to possibly yield a relevant conflict degree.

\section{Conclusion}
\label{sec:ccl}

In the scope of the theory of belief functions, this paper provides new results on the consistency of $L_k$ norm based distances between commonalities and the $L_\infty$ norm based distance between plausibilities with the conjunctive rule of combination. We also prove the consistency of $L_k$ norm based distances between plausibilities and the $L_\infty$ norm based distance between commonalities with the disjunctive rule of combination. The investigated form of consistency is equivalent to Lipschitz continuity of the mapping obtained by fixing one of the operand of pairwise combinations under these rules. Since the corresponding Lipschitz constant is 1, this property means that combining any pair of belief functions with any third party belief function is a non-expansive operation. 

Outside the scope of belief functions, the results apply to random set distributions as belief functions can be interpreted as epistemic random sets. In this more general context, the conjunctive rule yields the distribution of the intersection of two independent random sets while the disjunctive rule yields the distribution of the union of two independent random sets. Commonalities map any subset $A$ to the probability that the random set is a superset of $A$ (inclusion functional). Plausibilities map any subset $A$ to the probability that the random set intersects $A$ (hitting functional). Our results prove that if $F$ maps the distribution of a random set to the distribution of the intersection of this random set with a given (fixed) independent one, then $F$ is Lipschitz continuous with Lipschitz constant 1 w.r.t. $L_k$ norm based distances between inclusion functionals or the $L_\infty$ norm based distance between hitting functionals. Similarly, if $F$ maps the distribution of a random set to the distribution of the union of this random set with a given (fixed) independent one, then $F$ is Lipschitz continuous with Lipschitz constant 1 w.r.t. $L_k$ norm based distances between hitting functionals or the $L_\infty$ norm based distance between inclusion functionals.

We only investigate belief functions and random sets on finite spaces. Extending these results to uncountable spaces is an important perspective for future works. In the uncountable setting, random closed sets are defined as measurable mappings with respect to the Effros $\sigma$-algebra on the family of closed subsets of some locally compact Haussdorf completely separable topological space. The main results obtained in the finite case essentially rely on two aspects:
\begin{itemize}
	\item[(i)] intersection (resp. union) of independent random sets can be characterized by the elementwise multiplication of their inclusion (resp. containment) functionals,
	\item[(ii)] a closed form expression of $L_k$ norms for inclusion or hitting functionals.
\end{itemize}
Concerning the first aspect, 
%to the best of our knowledge, there is no clearly stated results generalizing these properties in uncountable spaces but it can be anticipated that they hold as well following from established results for multivariate capacity functionals \cite{schmelzer2012characterizing} and 
the fact that $(A\supseteq C \text{ and } B \supseteq C) \Leftrightarrow A\cap B  \supseteq C$ and $(A\subseteq C \text{ and } B \subseteq C) \Leftrightarrow A\cup B  \subseteq C$ is intuitively sufficient to obtain equivalent relations in uncountable spaces. The relation for unions of independent random closed sets is indirectly evoked for containment functionals in \cite[p.82]{molchanov2005theory}. 
The second aspects seems more challenging to generalize because one needs to introduce a norm for capacity functionals that is not just a vector norm. It may be possible to build such norms using Choquet integrals.

Another relevant research track for future works consists in investigating if the proposed Liptschitz continuity results hold as well when independence assumptions are not verified. Intuitively intersecting or uniting some pair of random sets with a third one that may or may not be dependent on either of them should still make their corresponding distributions closer (w.r.t. the appropriate distance). The difficulty in this regard is that one can no longer just work with marginal inclusion or containment functionals but with multivariate ones \cite{schmelzer2012characterizing} that do not factorize as elementwise products of marginal functionals. Recent work \cite{schmelzer2015joint,schmelzer2015sklar,schmelzer2018multivariate} generalizing Sklar's theorem on copulas to joint or multivariate capacity functionals of random sets may be useful in this quest because it gives an explicit connection between marginal functionals and multivariate ones. However, in contrast to point-valued random variables, a family of copulas is necessary to characterize this link.

Finally, going back to belief functions, we also discuss the advisability of building new degrees of conflict using distances that are consistent with the conjunctive rule. Such degrees are mainly interesting in information fusion applications of belief functions. We show that distances consistent with the conjunctive rule can be deemed to be relevant candidates for this purpose as they will achieve a desirable property for degrees of conflict. As for the distances for which we provide new consistency results (distance between commonalities or plausibilities), it turns out that they either violate another desirable property or coincide with an already known degree of conflict.

\section*{Acknowledgments}
The author is indebted to Frederic Pichon for the useful discussions on degrees of conflict and his remarks that helped him clarifying the meaning of the consistency property.

\appendix

\section{Set metrics and their consistency with set operations}
\label{ap:set_dist}

This appendix is meant to show that some distances between sets verify either property (a) or (b), see subsection \ref{sub:a_definition_of_consistency_between_distances_and_combinations_rules} for the definitions of these latter. We examine, the Hamming distance and the Jaccard distance:
\begin{itemize}
	\item Hamming distance: for any subsets $A,B$ and $C$, we have 
	\begin{align}
	d_{\text{ham}} \left( A\cap C,B\cap C \right) &= \frac{ |\left( A\cap C \right) \Delta \left( B \cap C \right)  |}{n} \\
	&= \frac{ |\left( A \Delta B \right) \cap C |}{n} \\
	&\leq \frac{ | A \Delta B  |}{n}.
	\end{align}
	So we see that $d_{\text{ham}}$ verifies (a).

	We can also write
	\begin{align}
	d_{\text{ham}} \left( A\cup C,B\cup C \right) &= \frac{ |\left( A\cup C \right) \Delta \left( B \cup C \right)  |}{n} \\
	&= \frac{ |\left( A \cup B \cup C  \right) \setminus \left( \left( A\cup C \right) \cap \left( B\cup C \right)   \right)  |}{n} \\
	&= \frac{ |\left( A \cup B \cup C  \right) \setminus \left( \left( A\cap B \right) \cup C   \right)  |}{n} \\
	&= \frac{ |\left( \left(   A\setminus C \right) \cup \left(   B \setminus C  \right) \right) \setminus \left(  A\cap B \right)   |}{n} \\
	&\leq \frac{ |\left( A \cup    B  \right) \setminus \left(  A\cap B \right)   |}{n} \\
	&\leq \frac{ | A \Delta B  |}{n},
	\end{align}
	and $d_{\text{ham}}$ verifies (b) as well.
	\item Jaccard distance: for any subsets $A,B$ and $C$, we have
	\begin{align}
	d_{\text{jac}} \left( A\cup C,B\cup C \right) &= \frac{|\left( A\cup C \right) \Delta \left( B \cup C \right)  |}{ | \left( A\cup C \right) \cup \left( B\cup C \right) | }\\
	&= \frac{| A \Delta B  | - | (A \Delta B) \setminus C |}{|A\cup B| + |C \setminus \left( A \cup B \right) |} \\
	&\leq \frac{| A \Delta B  |}{|A\cup B|}
	\end{align}
	So we see that $d_{\text{jac}}$ verifies (b).
	
To see that $d_{\text{jac}}$ does not verify property (a), we provide the following counter-example.

\begin{ex}
Let $A$ and $B$ denote two subsets that are not disjoint, i.e. $A\cap B \neq \emptyset$ and therefore $d_{\text{jac}} \left( A,B \right) = 1 - \frac{|A \cap B|}{|A\cup B|} < 1$.

If $C = A \Delta B$, then $A\cap C = A\setminus B$ and $B \cap C = B \setminus A$. This also implies that $\left( A\cap C \right) \Delta \left( B \cap C \right) = A \Delta B$ because $A\cap C$ and $B \cap C$ are disjoint. We also have $\left( A\cap C \right) \cup \left( B \cap C \right) = A \Delta B$ and consequently $d_{\text{jac}} \left( A\cap C,B \cap C \right) = 1$.
\end{ex}
	
\end{itemize}

\section{Proof of lemma \ref{lem:implic}} % (fold)
\label{ap:proof_of_lemma_lem:implic}

In this appendix, we give a proof that $b_{1|E} \left( A \right) = b_1 \left( \left( E\setminus A \right)^c  \right) $ for any implicability function $b_1$ and any subset $A$ and $E$.

\begin{proof}
By definition of the implicability function and conditioning, we have for any $A\subseteq \Omega$
\begin{align}
b_{1|E} \left( A \right) &= \sum_{B\subseteq A} \sum_{\substack{C\subseteq \Omega \\ \text{s.t.} \\ C\cap E = B}} m_1 \left( C \right). 
\end{align}
The second sum is empty if $E$ is not a superset of $B$. If this condition is verified, we remark that subsets $C$ are necessarily the union of $B$ and some subset of $E^c$. This gives
\begin{align}
b_{1|E} \left( A \right) &= \sum_{B\subseteq A \cap E} \sum_{D\subseteq E^c } m_1 \left(B \cup D \right). 
\end{align}
Finally, any subset $X$ of $\left( E\setminus A \right)^c $ can be partitioned w.r.t. $A\cap E$ and $E^c$, meaning that $\exists ! Y\subseteq A\cap E$ and $\exists ! Y' \subseteq E^c$ such that $X = Y\cup Y'$. Consequently, we have 
\begin{align}
b_{1|E} \left( A \right) &= \sum_{X \subseteq \left( E\setminus A \right)^c } m_1 \left( X \right) \\
&= b_1 \left( \left( E\setminus A \right)^c  \right) 
\end{align}

\end{proof}

% section proof_of_lemma_lem:implic (end)

\section{Proof of proposition \ref{prop:q_dist_conj}} % (fold)
\label{ap:q_dist_conj}

In this appendix, we give a proof that for $1 \leq k \leq \infty$, $\ocap$ is 1-Lipschitz continuous w.r.t. the $L_k$ norm based $q$-distance $d_{q,k}$.

\begin{proof}
 Suppose $m_1$, $m_2$ and $m_3$ are three mass functions on $\Omega$ and $\mathbf{q}_1$, $\mathbf{q}_2$ and $\mathbf{q}_3$ are their respective commonality vectors. For any positive finite integer $k$, we have:
\begin{eqnarray}
 \left[  d_{q,k}\pare{m_1\ocap m_3, m_2\ocap m_3} \right]^k &=& \left[   \left\|\mathbf{q}_{1\cap 3} - \mathbf{q}_{2\cap 3} \right\|_{k}\right]^k \nonumber\\
   &=& \sum_{A\subseteq \Omega} |q_{1\cap 3}\pare{A} - q_{2\cap 3}\pare{A}|^k \nonumber \\
   &=& \sum_{A\subseteq \Omega} |q_{1}\pare{A}q_{3}\pare{A} - q_{2}\pare{A}q_{3}\pare{A}|^k \nonumber\\
   &=& \sum_{A\subseteq \Omega} |q_{3}\pare{A}|^k |q_{1}\pare{A} - q_{2}\pare{A}|^k, \nonumber
\end{eqnarray}
For any subset $A\subseteq \Omega$, we have that $0 \leq q_{3}\pare{A}\leq 1$ thus we obtain
\begin{eqnarray}
 \left[  d_{q,k}\pare{m_1\ocap m_3, m_2\ocap m_3} \right]^k &\leq& \sum_{A\subseteq \Omega}  |q_{1}\pare{A} - q_{2}\pare{A}|^k  \nonumber\\
   &\leq& \left[  d_{q,k}\pare{m_1, m_2} \right]^k.\nonumber
\end{eqnarray}
By definition, this latter inequality means that distance $d_{q,k}$ is consistent with rule $\ocap$.

If $k=\infty$, we have:
\begin{eqnarray}
 d_{q,\infty}\pare{m_1\ocap m_3, m_2\ocap m_3} &=& \left\|\mathbf{q}_{1\cap 3} - \mathbf{q}_{2\cap 3} \right\|_{\infty} \nonumber\\
   &=& \max_{A\subseteq \Omega} |q_{1\cap 3}\pare{A} - q_{2\cap 3}\pare{A}| \nonumber \\
   &=& \max_{A\subseteq \Omega} |q_{1}\pare{A}q_{3}\pare{A} - q_{2}\pare{A}q_{3}\pare{A}|  \nonumber\\
   &=& q_{3}\pare{B} |q_{1}\pare{B} - q_{2}\pare{B}|, \nonumber
\end{eqnarray}
with $B=\underset{A\subseteq \Omega}{\arg\max} \braces{q_{3}\pare{A} |q_{1}\pare{A} - q_{2}\pare{A}|}$. It follows that
\begin{eqnarray}  
    d_{q,\infty}\pare{m_1\ocap m_3, m_2\ocap m_3} &\leq & |q_{1}\pare{B} - q_{2}\pare{B}|  \nonumber\\
   &\leq & \max_{A\subseteq \Omega} |q_{1}\pare{A} - q_{2}\pare{A}|  \nonumber\\
   &\leq & d_{q,\infty}\pare{m_1, m_2}.\nonumber
\end{eqnarray}
By definition, this latter inequality means that $\ocap$ is 1-Lipschitz continuous w.r.t. $d_{q,\infty}$.
\end{proof}

% section proof_of_proposition_prop:q_dist_conj (end)

\section{Proof of proposition \ref{prop:pl_dist_conj}} % (fold)
\label{ap:pl_dist_conj}

In this appendix, we give a proof that $\ocap$ is 1-Lipschitz continuous w.r.t. the $L_\infty$ norm based $pl$-distance $d_{pl,\infty}$.

\begin{proof}
Suppose $m_1$, $m_2$ and $m_3$ are three mass functions on $\Omega$ and $\mathbf{pl}_1$, $\mathbf{pl}_2$ and $\mathbf{pl}_3$ are their respective plausibility vectors. Let us first prove an intermediate result in case $m_3=m_E$ is categorical. We can write
\begin{align}
 d_{pl,\infty}\left( m_{1|E}, m_{2|E} \right) &= \underset{A\subseteq \Omega}{\max}\; |pl_{1|E}\left( A \right) - pl_{2|E} \left( A \right)  |.
 \end{align}
 Using the fact that $pl$ and $b$-distances coincide and lemma \ref{lem:implic}, we obtain
 \begin{align}
 d_{pl,\infty}\left( m_{1|E}, m_{2|E} \right) &= \underset{A\subseteq \Omega}{\max}\; |b_{1|E}\left( A \right) - b_{2|E} \left( A \right)  | \\
 &= \underset{A\subseteq \Omega}{\max}\; |b_{1}\left( \left( E\setminus A \right)^c  \right) - b_{2} \left( \left( E\setminus A \right)^c \right)  | \\
 &\leq \underset{A\subseteq \Omega}{\max}\; |b_{1}\left( A  \right) - b_{2} \left( A \right)  |.
 \end{align}
 So the consistency condition is verified when $m_3$ is categorical. Now, let us examine the general case where $m_3$ is not necessarily categorical. Let $\mathbf{B}$ denote the transfer matrix \cite{Sme02} allowing to obtain vector forms of implicability functions by right-handed dot product with the vector form of their corresponding mass functions. We can write
 \begin{align}
 d_{pl,\infty}\left( m_{1}\ocap m_3, m_{2}\ocap m_3 \right) &= \left\| \mathbf{pl}_{1\cap3} - \mathbf{pl}_{2\cap3}\right\|_\infty \\
 &= \left\| \mathbf{b}_{1\cap3} - \mathbf{b}_{2\cap3}\right\|_\infty \\
 &= \left\| \mathbf{B} \cdot \left( \mathbf{m}_{1\cap3} - \mathbf{m}_{2\cap3} \right)  \right\|_\infty \\
 &= \left\| \mathbf{B} \cdot \left( \mathbf{S}_{1} - \mathbf{S}_{2} \right) \cdot \mathbf{m}_3  \right\|_\infty.
 \end{align}
 One can always decompose a mass function as a convex combination of categorical ones: $\mathbf{m}_3 = \sum\limits_{E \subseteq \Omega} m_3 \left( E \right)  \mathbf{m}_E$. We obtain
 
\begin{align}
d_{pl,\infty}\left( m_{1}\ocap m_3, m_{2}\ocap m_3  \right) & =  \left\| \sum_{E \subseteq \Omega} m_3 \left( E \right)\mathbf{B} \cdot \left( \mathbf{S}_{1} - \mathbf{S}_{2} \right) \cdot \mathbf{m}_E  \right\|_\infty \\
&\leq \sum_{E \subseteq \Omega} m_3 \left( E \right) \left\| \mathbf{B} \cdot \left( \mathbf{S}_{1} - \mathbf{S}_{2} \right) \cdot \mathbf{m}_E  \right\|_\infty,
\end{align}
with the last inequality following from the triangle inequality and absolute homogeneity properties of the $L_\infty$ norm.

From our intermediate result, we know that for any $E$, $\left\| \mathbf{B} \cdot \left( \mathbf{S}_{1} - \mathbf{S}_{2} \right) \cdot \mathbf{m}_E  \right\|_\infty = d_{pl,\infty} \left( m_{1|E},m_{2|E} \right) \leq d_{pl,\infty} \left( m_{1},m_{2} \right) $. Consequently, we have
\begin{align}
d_{pl,\infty}\left( m_{1}\ocap m_3, m_{2}\ocap m_3 \right) &\leq d_{pl,\infty} \left( m_{1},m_{2} \right)  \sum_{E \subseteq \Omega} m_3 \left( E \right) \\
&\leq d_{pl,\infty} \left( m_{1},m_{2} \right).
\end{align}

By definition, this latter inequality means that $\ocap$ is 1-Lipschitz continuous w.r.t. $d_{pl,\infty}$.
\end{proof}

% section proof_of_proposition_prop:pl_dist_conj (end)

\section{Proof of proposition \ref{prop:pl_dist_disj}} % (fold)
\label{ap:pl_dist_disj}

In this appendix, we give a proof that for $1 \leq k \leq \infty$, $\ocup$ is 1-Lipschitz continuous w.r.t. the $L_k$ norm based $pl$-distance $d_{pl,k}$.

\begin{proof}
Suppose $m_1$, $m_2$ and $m_3$ are three mass functions on $\Omega$ and $\mathbf{pl}_1$, $\mathbf{pl}_2$ and $\mathbf{pl}_3$ are their respective plausibility vectors. For any positive finite integer $k$, we have:
\begin{eqnarray}
 \left[  d_{pl,k}\pare{m_1\ocup m_3, m_2\ocup m_3} \right]^k &=& \left[   \left\|\mathbf{pl}_{1\cup 3} - \mathbf{pl}_{2\cup 3} \right\|_{k}\right]^k \nonumber\\
   &=& \sum_{A\subseteq \Omega} |pl_{1\cup 3}\pare{A} - pl_{2\cup 3}\pare{A}|^k \nonumber \\
   &=& \sum_{A\subseteq \Omega} |b_{1\cup 3}\pare{A^c} - b_{2\cup 3}\pare{A^c}|^k \nonumber \\
   &=& \sum_{A\subseteq \Omega} |b_{1}\pare{A^c}b_3 \left( A^c \right)  - b_{2}\pare{A^c}b_3 \left( A^c \right) |^k \nonumber \\
   &=& \sum_{A\subseteq \Omega} |b_3 \left( A^c \right)|^k |b_{1}\pare{A^c}  - b_{2}\pare{A^c} |^k. \nonumber
\end{eqnarray}
For any subset $A\subseteq \Omega$, we have that $0 \leq b_{3}\pare{A^c} \leq 1$ and we obtain
\begin{eqnarray}
 \left[  d_{pl,k}\pare{m_1\ocup m_3, m_2\ocup m_3} \right]^k  &\leq& \sum_{A\subseteq \Omega}  |b_{1}\pare{A^c}  - b_{2}\pare{A^c} |^k \nonumber \\
 &\leq& \sum_{A\subseteq \Omega}  |pl_{1}\pare{A}  - pl_{2}\pare{A} |^k\\ \nonumber
 &\leq& \left[  d_{pl,k}\pare{m_1, m_2} \right]^k.
\end{eqnarray}
By definition, this latter inequality means that distance $d_{pl,k}$ is consistent with rule $\ocup$.

If $k=\infty$, we have:
\begin{eqnarray}
 d_{pl,\infty}\pare{m_1\ocup m_3, m_2\ocup m_3} &=& \left\|\mathbf{pl}_{1\cup 3} - \mathbf{pl}_{2\cup 3} \right\|_{\infty}, \nonumber\\
   &=& \max_{A\subseteq \Omega} |pl_{1\cup 3}\pare{A} - pl_{2\cup 3}\pare{A}| \nonumber \\
   &=& \max_{A\subseteq \Omega} |b_{1\cup 3}\pare{A^c} - b_{2\cup 3}\pare{A^c}| \nonumber \\   
   &=& \max_{A\subseteq \Omega} |b_{1}\pare{A^c}b_{3}\pare{A^c} - b_{2}\pare{A^c}b_{3}\pare{A^c}| \nonumber\\
   &=& b_{3}\pare{B^c} |b_{1}\pare{B^c} - b_{2}\pare{B^c}|, \nonumber
\end{eqnarray}
with $B=\underset{A\subseteq \Omega}{\arg\max} \braces{b_{3}\pare{A^c} |b_{1}\pare{A^c} - b_{2}\pare{A^c}|}$. It follows that
\begin{eqnarray}  
    d_{pl,\infty}\pare{m_1\ocup m_3, m_2\ocup m_3} &\leq & |b_{1}\pare{B^c} - b_{2}\pare{B^c}| \nonumber\\
    &\leq & |pl_{1}\pare{B} - pl_{2}\pare{B}| \nonumber\\
   &\leq & \max_{A\subseteq \Omega} |pl_{1}\pare{A} - pl_{2}\pare{A}| \nonumber\\
   &\leq & d_{pl,\infty}\pare{m_1, m_2}.\nonumber
\end{eqnarray}
By definition, this latter inequality means that $\ocup$ is 1-Lipschitz continuous w.r.t. $d_{pl,\infty}$.

\end{proof}

% section proof_of_proposition_prop:pl_dist_disj (end)

\section{Proof of proposition \ref{prop:q_dist_disj}} % (fold)
\label{ap:q_dist_disj}

In this appendix, we give a proof that $\ocup$ is 1-Lipschitz continuous w.r.t the $L_\infty$ norm based $q$-distance $d_{q,\infty}$.

\begin{proof}
Suppose $m_1$, $m_2$ and $m_3$ are three mass functions on $\Omega$ and $q_1$, $q_2$ and $q_3$ are their respective commonality functions. Using relations \eqref{eq:q_and_b} and \eqref{eq:de_morgan}, we can write
\begin{align}
d_{q,\infty} \left( m_{1\cup 3}, m_{2\cup 3} \right) &= \underset{A\subseteq \Omega}{\max}\; |q_{1\cup3}\left( A \right) - q_{2\cup 3}\left( A \right)  |\\
&= \underset{A\subseteq \Omega}{\max}\; |\overline{b}_{1\cup3}\left( A^c \right) - \overline{b}_{2\cup 3}\left( A^c \right)  |\\
&= d_{b,\infty} \left( \overline{m_1 \ocup m_3} , \overline{m_2 \ocup m_3} \right)\\
&= d_{pl,\infty}\left( \overline{m_1 \ocup m_3} , \overline{m_2 \ocup m_3} \right)\\
&= d_{pl,\infty}\left( \overline{m}_{1}\ocap \overline{m}_{3} , \overline{m}_{2}\ocap \overline{m}_{3} \right).
\end{align}
Since $d_{pl,\infty}$ is consistent w.r.t. $\ocap$, we obtain
\begin{align}
d_{q,\infty} \left( m_{1\cup 3}, m_{2\cup 3} \right) &\leq d_{pl,\infty}\left( \overline{m}_{1} , \overline{m}_{2} \right)\\
\text{and } d_{pl,\infty}\left( \overline{m}_{1} , \overline{m}_{2} \right) &= d_{b,\infty}\left( \overline{m}_{1} , \overline{m}_{2} \right)\\
&=  \underset{A\subseteq \Omega}{\max}\; |\overline{b}_{1}\left( A \right) - \overline{b}_{2}\left( A \right)  |\\
&=  \underset{A\subseteq \Omega}{\max}\; |q_{1}\left( A^c \right) - q_{2}\left( A^c \right)  |\\
&=  d_{q,\infty} \left( m_{1}, m_{2} \right).
\end{align}
By definition, this latter inequality means that $\ocup$ is 1-Lipschitz continuous w.r.t. $d_{q,\infty}$.
\end{proof}

% section proof_of_proposition_prop:q_dist_disj (end)

\section{Proof of proposition \ref{rmk:1}} % (fold)
\label{sec:proof_of_proposition_rmk:1}

In this appendix, we provide a proof that the extreme conflict values property cannot be verified when a degree of conflict $\mathcal{C}$ is defined as 
$$\mathcal{C} \left( m_1,m_2 \right) = 1 - d \left( m_{1\cap 2},m_\emptyset \right)  $$
where $d$ is an $L_k$ norm based distance between either commonality functions, plausibility functions or specialization matrices when $k$ is finite.

\begin{proof}
We obviously have $\mathcal{C} \left( m_1,m_2 \right)  = 1$ iff $m_{1\cap 2}= m_\emptyset$ and mass functions $m_1$ and $m_2$ are maximally conflicting so the problem does not come from this side of property (i). 

Now, suppose $d = d_{q,k}$ and $k$ is finite. Since $\mathcal{C} \left( m_1,m_2 \right)  = 0 \Leftrightarrow d_{q,k} \left( m_{1\cap 2},m_\emptyset \right)=1$, we can write

\begin{align}
 d_{q,k} \left( m_{1\cap 2},m_\emptyset \right) &= 1\\
 \Leftrightarrow \frac{1}{\rho} \left\| \mathbf{q}_{1\cap 2} - \mathbf{q}_{\emptyset} \right\|_k &= 1\\
 \Leftrightarrow \frac{1}{\rho^k} \sum_{A\subseteq \Omega} |q_{1\cap 2} \left( A \right)  - q_{\emptyset} \left( A \right) |^k &= 1.
\end{align}
Remember that $\rho^k = N-1$ for $L_k$ norm based commonality distances. Moreover, $q_\emptyset \left( A \right)=0$ when $A\neq \emptyset$ and that $q \left( \emptyset \right)=1 $ for any commonality function therefore we obtain
\begin{align}
 d_{q,k} \left( m_{1\cap 2},m_\emptyset \right) &= 1 \\
 \Leftrightarrow  \sum_{\substack{ A\subseteq \Omega \\ A \neq \emptyset}} |q_{1\cap 2} \left( A \right)  |^k &= N-1.
\end{align}
Finally, the sum of commonalities (power $k$) cannot be equal to $N-1$ unless $q_{1\cap 2} \left( A \right)=1$ for each $A\neq \emptyset$. This means that $q_{1\cap 2}= q_\Omega$ which implies that $q_1=q_2=q_\Omega$. This latter condition is not an admissible definition of non-conflict. The proof for distances between plausibilities is extremely similar and is thus omitted. 

Concerning, distances between specialization matrices, the philosophy is also similar but we provide a sketch of the proof. 
When dealing with an $L_k$ norm based specialization distance, we have $\rho^k = 2 \left( N-1 \right) $ and this distance is achieved for the pair $\left( m_\Omega, m_\emptyset \right) $. To see that $m_\Omega$ is the only mass function achieving maximal distance with $m_\emptyset$, one just needs to observe that the matrix entry $S_\emptyset(A,B)=1 $ if $A=\emptyset$ and $S_\emptyset(A,B)=0 $ otherwise. We can write
\begin{align}
\left[   d_{\text{spe},k}\left( m_{1\cap 2} ,m_\emptyset\right) \right]^k &= \sum_{E\subseteq \Omega} \left( \left\| \mathbf{m}_{1\cap 2|E} - \mathbf{m}_\emptyset \right\|_k \right)^k.
\end{align}
The only way to maximize the above expression is to maximize each $\left\| \mathbf{m}_{1\cap 2|E} - \mathbf{m}_\emptyset \right\|_k$ individually for $E\neq \emptyset$. We know that $\left\| \mathbf{m}_{1\cap 2|E} - \mathbf{m}_\emptyset \right\|_k\leq 2$ and we need $m_{1\cap 2|E} \left( \emptyset \right)=0 $ to achieve this maximal value. This is not possible unless $m_{1\cap 2} = m_\Omega$. Again, $m_{1\cap 2} = m_\Omega$ implies that $m_1 = m_2 = m_\Omega$.
\end{proof}

% section proof_of_proposition_rmk:1 (end)

\section{Proof of proposition \ref{rmk:2}} % (fold)
\label{sec:proof_of_proposition_rmk:2}

In this appendix, we provide a proof that the degree of conflict $\mathcal{C}_{q,\infty}$ defined as 
$$\mathcal{C}_{q,\infty} \left( m_1,m_2 \right) = 1 - d_{q,\infty} \left( m_{1\cap 2},m_\emptyset \right)  $$
coincides with the degree of strong conflict $K$.

\begin{proof}
By definition of $\mathcal{C}_{q,\infty}$, we can write
\begin{align}
	\mathcal{C}_{q,\infty} \left( m_1,m_2 \right)  &=  1 - \underset{A \subseteq \Omega}{\max}\; |q_{1\cap 2} \left( A \right) - q_\emptyset \left( A \right) |.
	\end{align}
	Since $q_\emptyset \left( A \right)=0$ when $A\neq \emptyset$ and that $q \left( \emptyset \right)=1 $ for any commonality function, we obtain
	\begin{align}
	\mathcal{C}_{q,\infty} \left( m_1,m_2 \right) &= 1 - \underset{\substack {A \subseteq \Omega \\ A \neq \emptyset}}{\max}\; |q_{1\cap 2} \left( A \right)|.
	\end{align}	
	Any commonality function is such that $q \left( A \right) \geq q \left( A' \right)  $ if $A \subseteq A'$, which gives
	\begin{align}
	\mathcal{C}_{q,\infty} \left( m_1,m_2 \right) &= 1 - \underset{\substack {a \in \Omega \\ A \neq \emptyset}}{\max}\; |q_{1\cap 2} \left( \left\{ a \right\} \right)|.
	\end{align}	
	Finally commonality and plausibility functions coincide on singletons, hence $\mathcal{C}_{q,\infty} \left( m_1,m_2 \right) = K \left( m_1,m_2 \right)$.
\end{proof}

% section proof_of_proposition_rmk:2 (end)

\section{Proof of proposition \ref{rmk:2_bis}} % (fold)
\label{sec:proof_of_proposition_rmk:2_bis}

In this appendix, we provide a proof that the degree of conflict $\mathcal{C}_{\text{spe},\infty}$ defined as 
$$\mathcal{C}_{\text{spe},\infty} \left( m_1,m_2 \right) = 1 - d_{\text{spe},\infty} \left( m_{1\cap 2},m_\emptyset \right)  $$
coincides with Dempster's degree of conflict $\kappa$.

\begin{proof}
By definition, we have
\begin{align}
d_{\text{spe},\infty} \left( m_{1\cap 2},m_\emptyset \right) &= \underset{A,B\subseteq \Omega}{\max}\; |S_{1\cap 2}\left( A,B \right) - S_\emptyset \left( A,B \right)  | \\
&= \underset{A\subseteq \Omega}{\max}\; \left\| \mathbf{m}_{1\cap 2|A} - \mathbf{m}_\emptyset \right\|_\infty.
\end{align}
For any $A\subseteq \Omega$, we have
\begin{align}
\left\| \mathbf{m}_{1\cap 2|A} - \mathbf{m}_\emptyset \right\|_\infty &= \max \left\{ 1-m_{1\cap 2|A}\left( \emptyset \right);   \underset{\substack{E\subseteq \Omega \\ E\neq \emptyset}}{\max}\; m_{1\cap 2|A}\left( E \right) \right\}.
\end{align}
For any $E\neq \emptyset$, we have
\begin{align}
m_{1\cap 2|A}\left( E \right) &\leq \sum_{\substack{E'\subseteq \Omega \\ E'\neq \emptyset}} m_{1\cap 2|A}\left( E' \right) \\
&\leq 1 - m_{1\cap 2|A}\left( \emptyset \right).
\end{align}

We deduce that $\left\| \mathbf{m}_{1\cap 2|A} - \mathbf{m}_\emptyset \right\|_\infty = 1-m_{1\cap 2|A}\left( \emptyset \right)$. Finally, Dempster's degree of conflict can only grow as one performs a conjunctive combination therefore $\underset{A\subseteq \Omega}{\max}\;1-m_{1\cap 2|A}\left( \emptyset \right) = 1 -m_{1\cap 2}\left( \emptyset \right)  $, hence $\mathcal{C} _{\text{spe},\infty}\left( m_1,m_2 \right) = \kappa \left( m_1,m_2 \right) $.
\end{proof}

\section{Proof of proposition \ref{rmk:3}} % (fold)
\label{sec:proof_of_proposition_rmk:3}

In this appendix, we give a proof that for some degree of conflict $\mathcal{C}$ defined as 
$$\mathcal{C} \left( m_1,m_2 \right) = 1 - d \left( m_{1\cap 2},m_\emptyset \right),  $$
where $d$ is a mass function distance consistent with $\ocap$, then $\mathcal{C}$ satisfies property (iii) (from \cite{des2013}) w.r.t. the Dempsterian partial order $\sqsubseteq_d$.

\begin{proof}
By definition of the Dempsterian partial order, $m_1 \sqsubseteq_d m_1'$ means that there exists a mass function $m$ such that $m_1 = m'_1 \ocap m$. 
If $d$ is consistent w.r.t. $\ocap$, then for any mass function  $m_2$, we have 
\begin{align}
d \left( m'_1 \ocap m_2 \ocap m, m_\emptyset \ocap m \right) &\leq  d \left( m'_1 \ocap m_2 , m_\emptyset  \right) \\
\Leftrightarrow d \left( m_1 \ocap m_2 , m_\emptyset \right) &\leq  d \left( m'_1 \ocap m_2 , m_\emptyset  \right) \\
\Leftrightarrow \mathcal{C} \left( m_1,m_2 \right) &\geq \mathcal{C} \left( m'_1,m_2 \right).
\end{align}

\end{proof}

% section proof_of_proposition_rmk:2_bis (end)

% \section{Proof of proposition \ref{rmk:3}} % (fold)
% \label{sec:proof_of_proposition_rmk:3}

% section proof_of_proposition_rmk:3 (end)

\section{Consistency of distances with $\alpha$-junctions}\label{ap:alpha_junction}

In \cite{Lou16}, two families of mass function distances are introduced. Each of them relies on a given type of evidential matrix and a matrix norm. The evidential matrices in question are either an $\alpha$-specialization matrix or an $\alpha$-generalization matrix. These matrices are a generalization of the specialization matrix and generalization matrix, in the sense that these two matrices are retrieved by setting $\alpha=1$. The definition of these more general matrices stems from a class of combination rules known as $\alpha$-junctions \cite{Sme97}. 

In short, $\alpha$-junctions are linear combination rules that do not depend on the order in which mass functions are combined. This axiomatic justification of these properties is detailed in~\cite{Sme97}. These rules also have a meta-data dependent interpretation. These meta-data characterize the truthfulness of the sources that induced the mass functions. A source is truthful if it conveys the pieces of information it possesses and it is untruthful if it conveys inconsistent pieces of information as compared to the ones it possesses. For example, suppose a source has inferred that $\left\{ \theta \in A \right\}$. If it is truthful it conveys the mass function $m_A$ while it conveys $m_{A^c}$ if it is untruthful. The $\alpha$-junctions allow to combine mass functions in several situations ranging between these two extreme cases. This interpretation is documented in \cite{Pic09,Pic12-2,Lou16,Kle14}.

Concerning evidential matrices, the most important point for the present discussion is that for each $\alpha \in \left[ 0;1 \right] $, there is a bijective correspondence between a given mass function $m_i$ and an $\alpha$-specialization matrix $\mathbf{D}^{(\alpha)}_i$. Another bijective correspondence also exists between a given mass function $m_i$ and an $\alpha$-generalization matrix $\mathbf{G}^{(\alpha)}_i$. The main result in \cite{Lou16} states that, for any $\alpha \in \left[ 0;1 \right] $, the distance induced by the $L_1$ matrix norm of the difference between a pair of $\alpha$-specialization matrices is consistent with the $\alpha$-conjunctive rule. Likewise, for any $\alpha \in \left[ 0;1 \right] $, the distance induced by the $L_1$ matrix norm of the difference between a pair of $\alpha$-generalization matrices is consistent with the $\alpha$-disjunctive rule.

Evidential matrices are not the only representation of states of beliefs induced by $\alpha$-junctions. One can also define $\alpha$-commonality functions \cite{Sme97} $q_i^{(\alpha)}$. Let $\mathbf{Q}^{(\alpha)}$ denote the matrix obtained by $n$ Kronecker product as
\begin{equation}\label{eq:kron_alpha}
	\mathbf{Q}^{(\alpha)} = \text{Kron}\left( \begin{bmatrix} 1 & 1 \\ \alpha-1 & 1\end{bmatrix}, \hdots, \text{Kron}\left(\begin{bmatrix} 1 & 1 \\ \alpha-1 & 1\end{bmatrix}, \begin{bmatrix} 1 & 1 \\ \alpha-1 & 1\end{bmatrix} \right) \right).
\end{equation}

The vector form of function $q_i^{(\alpha)}$ is obtained as 
\begin{equation}
	\mathbf{q}_i^{(\alpha)} = \mathbf{Q}^{(\alpha)} \cdot \mathbf{m}_i,
\end{equation}
with $\mathbf{m}_i$ the vector form of some mass function $m_i$. There is a bijective correspondence between $\alpha$-commonality functions and mass functions and the $\alpha$-commonality function in correspondence with the result of an $\alpha$-conjunctive combination is equal to the entrywise product of the $\alpha$-commonality functions in correspondence with the combined mass functions. Using this property, the same reasoning as in the proof of proposition \ref{prop:q_dist_conj} applies and any $L_k$ norm based distance between $\alpha$-commonality functions is consistent with the corresponding $\alpha$-conjunctive rule. For the proof to hold, one also needs that $|q_i^{(\alpha)}\left( B \right) | \leq 1$ for any $B \subseteq \Omega$. Looking at equation (\ref{eq:kron_alpha}), we actually have
\begin{equation}
	\alpha - 1 \leq q_i^{(\alpha)}\left( B \right) \leq 1, \forall B \subseteq \Omega.
\end{equation}
Similarly, Smets \cite{Sme97} also introduces $\alpha$-implicability functions $b_i^{(\alpha)}$. Let $\mathbf{B}^{(\alpha)}$ denote the matrix obtained by $n$ Kronecker product as
\begin{equation}\label{eq:kron_alpha2}
	\mathbf{B}^{(\alpha)} = \text{Kron}\left( \begin{bmatrix} 1 & 1 \\ 1 & \alpha - 1\end{bmatrix}, \hdots, \text{Kron}\left( \begin{bmatrix} 1 & 1 \\ 1 & \alpha - 1\end{bmatrix}, \begin{bmatrix} 1 & 1 \\ 1 & \alpha - 1\end{bmatrix} \right)\right).
\end{equation}

The vector form of function $b_i^{(\alpha)}$ is obtained as 
\begin{equation}
	\mathbf{b}_i^{(\alpha)} = \mathbf{B}^{(\alpha)} \cdot \mathbf{m}_i,
\end{equation}
with $\mathbf{m}_i$ the vector form of some mass function $m_i$. There is a bijective correspondence between $\alpha$-implicability functions and mass functions and the $\alpha$-implicability function in correspondence with the result of an $\alpha$-disjunctive combination is equal to the entrywise product of the $\alpha$-implicability functions in correspondence with the combined mass functions.
Using this property, the same reasoning as in the proof of proposition \ref{prop:pl_dist_disj} applies and any $L_k$ norm based distance between $\alpha$-implicability functions is consistent with the corresponding $\alpha$-disjunctive rule. For the proof to hold, one also needs that $|b_i^{(\alpha)}\left( B \right) | \leq 1$ for any $B \subseteq \Omega$. Again, from equation (\ref{eq:kron_alpha2}), we actually have
\begin{equation}
	\alpha - 1 \leq b_i^{(\alpha)}\left( B \right) \leq 1, \forall B \subseteq \Omega.
\end{equation}
Computational difficulties are found when $n$ increases for computing $\alpha$-commonality and $\alpha$-implicability functions but this is also true for computing $\alpha$-specialization or $\alpha$-generalization matrices, therefore all the distances evoked in this appendix section are on an equal footing from both theoretical and practical considerations.

%\section*{References}

\bibliographystyle{abbrv}
% argument is your BibTeX string definitions and bibliography database(s)
\bibliography{biblio_JK}

\end{document}